\theoremstyle{plain}
\newtheorem{theorem}{Theorem}[section]
\newtheorem{lemma}[theorem]{Lemma}
\newtheorem{corollary}[theorem]{Corollary}
\DeclareMathAlphabet{\mathpzc}{OT1}{pzc}{m}{it}
\newcommand{\be}{\begin{eqnarray}}
\newcommand{\ee}{\end{eqnarray}}
\newcommand{\ben}{\begin{eqnarray*}}
	\newcommand{\een}{\end{eqnarray*}}
\theoremstyle{plain}
\newcommand{\bZ}{\boldsymbol{Z}}
\newcommand{\bX}{\boldsymbol{X}}
\newcommand{\bA}{\boldsymbol{A}}
\newcommand{\bbeta}{\boldsymbol{\beta}}
\newcommand{\mE}{\mathbb{E}}
\newcommand{\mP}{\mathbb{P}}
\newcommand{\mU}{\mathcal{U}}
\newcommand{\mC}{\mathcal{C}}
\newcommand{\ovar}{\operatorname{var}}
\newcommand{\ocov}{\operatorname{cov}}
\begin{document}
\title{Cluster-Adaptive Network A/B  Testing: From Randomization to Estimation}

%

\author{%
  Yifan~Zhou$^{a}$,~ Yang~Liu$^{a}$,~ Ping Li$^{b}$,~ and Feifang~ Hu$^{a}$ \thanks{The work of Feifang Hu was partially conducted while he was a consulting researcher at Baidu Research.}
			
	
	}
\date{%
    $^a$Department of Statistics, George Washington University, Washington, DC;\\ $^b$Cognitive Computing Lab, Baidu Research,  Bellevue, WA\\[2ex]%
    \today
}

\maketitle


\begin{abstract}
A/B testing is an important decision-making tool in product development for  evaluating  user engagement or satisfaction from a new service, feature or product. The goal of A/B testing is to estimate the average treatment effects (ATE) of a new change, which becomes complicated when users are interacting. When the important assumption of A/B testing, the Stable Unit Treatment Value Assumption (SUTVA), which states that each individual’s response is affected by their own treatment only, is not valid, the classical estimate of the ATE usually leads to a wrong conclusion.
In this paper, we propose a cluster-adaptive network A/B testing procedure, which involves a sequential cluster-adaptive randomization and a cluster-adjusted estimator. The cluster-adaptive randomization is employed to minimize the cluster-level Mahalanobis distance within the two treatment groups, so that the variance of the estimate of the ATE can be reduced. In addition, the cluster-adjusted estimator is used to eliminate the bias caused by network interference, resulting in a consistent estimation for the ATE. Numerical studies suggest our cluster-adaptive network A/B testing achieves consistent estimation with higher efficiency. An empirical study is conducted based on a real world network  to illustrate how our method can benefit decision-making in application.
\end{abstract}

\section{Introduction}

\noindent 

A/B testing has become the  gold standard to compare a new service, a new  strategy, or a  new feature to its old counterpart
 \cite{kohavi2013online,kohavi2014seven}.  It has been  widely used by web technology companies such as Amazon, eBay, Facebook, Google, LinkedIn, etc., to seek  data driven decisions and  the development of new products based on customers' interests.

In the context of  A/B testing,   the primary quantity of interest is usually defined as  the average treatment effect (ATE),  which can be  expressed as 
\begin{equation}
\tau = \frac{1}{N} \sum_{i=1}^{N}\mathbb{E}\left[ Y_i(Z_i=1)- Y_i(Z_i=0)\right],
\end{equation}
where $Z_i$ is the treatment assignment of the $i$ th user, i.e., $Z_i=1$ if the $i$ th  user is assigned to treatment A and $Z_i=0$ if the user is assigned to treatment B, and $Y_i(Z_i =z)$  is the response of the $i$ th user if $Z_i=z$ for $z=0,1$.  The classical estimator  (CE) uses the difference of the two sample means, i.e., 
\begin{equation}
\hat{\tau}_{CE}  = \frac{1}{N_A}\sum_{\lbrace i:Z_i=1\rbrace} Y_i(Z_i=1) - \frac{1}{N_B} \sum_{\lbrace i:Z_i=0\rbrace} Y_i(Z_i=0),
\end{equation}
where $N_A$, $N_B$ are the number of  users in treatment A and B respectively. In practice,  the experimenter will first assign users with  complete randomization, i.e.,  assign users to treatments A or B with  equal probabilities of 1/2, and then use $\hat{\tau}_{CE}$ to estimate the ATE. This method relies on  the  \textit{ Stable Unit Treatment Value Assumption} (SUTVA) \cite{imbens2015causal}, which requires that the response of each user in the experiment depends only on their own treatment and not on the treatments of others. Under the SUTVA, $\hat{\tau}_{CE}$ is unbiased and can provide reliable estimation of the ATE.  However,  users  in an A/B testing are commonly involved in  a social network, so that a user's behavior/response can be influenced by his/her social neighbors/friends. For example, if a user frequently uses an emoji when sending messages, then his/her friends may be affected to use the same emoji.  This phenomenon is typically called  \textit{network effects}, also known as \textit{peer influence} or \textit{social interference} \cite{eckles2016design}. In this case, $\hat{\tau}_{CE}$ is biased and can not provide valid estimation of the ATE \cite{eckles2016design,Gui2015}.  

Furthermore, it is often that the  users in a social network can be  grouped into different clusters with different features. The users in the different clusters usually have different behaviors with their  responses, whereas the users within the same clusters  share similar behaviors. For example,  users within the  same age group may have the same taste for music or like the same kind of movies, while the users within different age groups would have different preferences on music, movies, etc. We refer to this phenomenon as \textit{cluster effect}.


To demonstrate the effects caused by a social network for A/B testing, consider a network graph $\mathcal{G}=(V,E)$  with $N= |V|$ vertices (users).  Suppose the  vertices of $V$ can be partitioned into $m$ disjoint clusters $\mC_1,...,\mC_{m}$,  and   $n_j$ is  the cluster size of $\mC_j$.  Let $\bm{X}_j$ be the $p$-vector of covariates  of the $j$ th cluster, of which the covariates represent some important common features of the cluster, i.e., cluster size, number of inside-edges,  education levels of the users,  age groups of the users,  etc. If the $i$ th user is in $\mC_j$, the assumed  model  for  the $i$ th user's response is  
\begin{equation}
\label{Y.eq} Y_i  =    \mu_0(1-Z_i) + \mu_1 Z_i + \alpha_0 \bA_{i*}\bZ (Z_i -1) + \alpha_1 \bA_{i*}(\bZ-\boldsymbol{1}) Z_i + \bbeta^{\text{t}}\bX_j + \epsilon_i, 
\end{equation}
where $\mu_1,\mu_0$ are the effects  of treatments A and B, respectively,  $\bbeta$ is  the cluster effect, and $\epsilon_i\sim N(0, \sigma^2_\epsilon)$ is i.i.d. random error. Here,   $\bA_{i*}$ is the $i$-th row of the adjacency matrix $\bA$. In addition,    $\bZ = (Z_1,...,Z_{N})^{\text{t}}$ is the  vector of assignments for all $N$ users and $\boldsymbol{1} = (1,...,1)^{\text{t}}$  is a vector of ones. Therefore, $\alpha_0, \alpha_1$ represent the network spill-over effect, i.e., if the  effect on the $i$ th user if he/she is in treatment A/B but has neighbors in treatment B/A, respectively. In literature, different authors assume different response models to demonstrate the effect caused from a social network \cite{Gui2015,jiang2016,middleton2011unbiased,raudenbush1997statistical,saint2019using,Ugander2013}. In this paper, we consider the response model (\ref{Y.eq}), because it can help us to study both the cluster effect and the network spill-over effect. Moreover, this model can be used to study  a variety of real world networks. Thus, (\ref{Y.eq}) is generous enough for the illustrating purpose.

From (\ref{Y.eq}), it can be seen that if the  network spill-over effect does not exist, i.e.,  $\alpha_0=\alpha_1=0$,  $\hat{\tau}_{CE}$ is an unbiased estimator of  $\tau= \mu_0-\mu_1$, but $\hat{\tau}_{CE} \not \to\tau$ when $\alpha_0,\alpha_1\ne0$. Therefore, the network effect should be  taken into account for the randomization step as well as the estimation step in the A/B testing problem. To eliminate the network effect, recent works suggest that  using  clusters as a unit for randomization following with an adjusted estimator can improve the estimation  of the ATE \cite{Gui2015,middleton2011unbiased,raudenbush1997statistical,Ugander2013}.  These proposed  procedures usually consist of  three steps: 1. community detection, 2. randomization over clusters, and 3. estimation of the ATE based on an assumed response model.  These methods are not "comparable",  because different procedures consist of different community detection methods, different randomization methods, and are based on different response models. Especially, different community detection methods result in very different results for the estimation of the ATE, even if  the last two steps use the same kind of approaches. For the randomization step, the most commonly used procedure is the complete randomization of  clusters, which does not take the  features of the clusters into account. Very few studies have considered  using the cluster's feature in randomization to produce better estimation results.

Unlike the design for network A/B testing, in the design of clinical trials and causal inference,  various  procedures are proposed to assign the users' treatments by   using the information of the users' features based on different criterion  \cite{hu2012asymptotic,morgan2012rerandomization,pocock1975sequential,qin2016pairwise,taves1974minimization}.  Recent research \cite{bugni2018inference,ma2015testing,ma2019statistical,shao2010theory} has shown that the randomization procedure which utilizes the covariates of the users to produce balanced treatment arms can improve the efficiency of the estimation of the ATE.  However, how to develop  a randomization  procedure using this idea in the context of network A/B testing is  still an open problem. 

In this paper,  we  solve this problem by showing  that using an appropriate randomization procedure which utilizes the information of the clusters and by  following with an appropriate estimator could not only obtain valid  (consistent) but also an efficient (reduced variance)  estimation  of the ATE. We propose a cluster-adaptive network testing procedure consisting  of two steps: first, a  cluster-adaptive randomization (CAR) procedure which sequentially assigns the clusters  to minimize the Mahalanobis distance of the cluster's covariates  between  treatment arms,  then   follow with  a cluster-adjusted estimator (CAE) which is consistent for the estimation of the ATE under randomization of clusters. To make the results caused from our randomization procedure comparable with others, we assume that the underlying network is known and fixed  throughout this paper. In practice, one should obtain the graph of the network first. This can be achieved by different community detection algorithms  \cite{bader2013graph,newman2006modularity,Ugander2013}.  Theoretical results as well as numerical studies provide the justification for the superiority of our procedure. 

The outline of the paper is as follows. In Section \ref{Section: CAN A/B testing}, we propose the  cluster-adaptive network A/B testing procedure and provide the theoretical guarantee for our procedure. Numerical  studies  of a hypothetical network  and the  MIT phone call network \cite{datarepository} are presented in Sections \ref{Section: Hypothetical network} and \ref{Section: Real world Network}. Finally,  a conclusion is drawn in Section \ref{Section:Conclusion}. Detailed proofs of the theoretical results, as well as  comprehensive numerical studies are presented in the supplementary material.


\section{Cluster-Adaptive Network A/B Testing}
\label{Section: CAN A/B testing}
\noindent

Suppose we have already achieved the partition of  the network graph,  so that the network is known.  To improve the efficiency for the estimation of the ATE, we consider  balancing the cluster features via a CAR procedure proposed in Section \ref{Section: Cluster-adaptive randomization}. Then we introduce a  CAE    in Section \ref{Section: Cluster-adjusted estimator}  to  obtain a consistent estimation of the ATE. Theoretical guarantees are provided   in Section \ref{Section: Thoeretical results} to justify our cluster-adaptive network A/B testing procedure.
\vspace{-0.1in}

\subsection{Cluster-Adaptive Randomization (CAR)}
\label{Section: Cluster-adaptive randomization}

\noindent

	\begin{algorithm}[H]
	\caption{Cluster-Adaptive Randomization (CAR)} 
	\begin{algorithmic}[1]
		\State \textbf{Input: }$\bX_1, ..., \bX_m$; $1/2<q<1;$
	\State Calculate $\operatorname{cov}(\bX)^{-1}$ based on $\bX_1,\cdots,\bX_m$

	\For {$k= 1$ to $\lceil m/2 \rceil$}
	\If {$2k \leq m$}  
	\State Let  $M_{2k}^{(1)}$ and  $M_{2k}^{(2)}$  be the pseudo Mahalanobis distance the first $k^{th}$  pairs and 
	\State calculated $M_{2k}^{(1)}$ from (\ref{Ma.dist}) given $(T_{2k-1}, T_{2k}) = (1, 0)$ 
	\State calculated $M_{2k}^{(2)}$ from (\ref{Ma.dist}) given $(T_{2k-1}, T_{2k}) = (0, 1)$
	\State $(T_{2k-1}, T_{2k}) \leftarrow (0, 1)$
	\If {$M_{2k}^{(1)} < M_{2k}^{(2)}$}
	\State Replace $(T_{2k-1}, T_{2k})$ with $(1, 0)$ with probability $q$
	\Else
	\If {$M_{2k}^{(1)} > M_{2k}^{(2)}$}
	\State Replace $(T_{2k-1}, T_{2k})$ with $(1, 0)$ with probability $1 - q$
	\Else
	\State Replace $(T_{2k-1}, T_{2k})$ with $(1, 0)$ with probability $1/2$
	\EndIf
	\EndIf
	\Else 
	\State $T_{2k-1} \leftarrow 1$ with probability 0.5, and 0 otherwise
	\EndIf
	\EndFor
	\end{algorithmic} 
	\label{Cluster-Adaptive Randomization Procedure}
\end{algorithm}

We adopt the idea similar to the ones used in  \cite{morgan2012rerandomization,qin2016pairwise}, which consider the  Mahalanobis distance as the imbalance measure.  Let  the Mahalanobis distance of  the first $2k$  assigned clusters be
 \begin{align}
 M_{2k} & = \left(\bar{\bm{X}}_{A}-\bar{\bm{X}}_{B}\right)^{\text{t}}\operatorname{cov}\left(\bar{\bX}_{A}-\bar{\bX}_B\right) ^{-1}\left(\bar{\bm{X}}_{A}-\bar{\bm{X}}_{B}\right) \nonumber\\
 &\propto \frac{k}{2} \cdot \left(\bar{\bm{X}}_{A}-\bar{\bm{X}}_{B}\right)^{\text{t}}\operatorname{cov}(\bX) ^{-1}\left(\bar{\bm{X}}_{A}-\bar{\bm{X}}_{B}\right), \label{Ma.dist}
 \end{align}  
where   $\bar{\bm{X}}_A$, $\bar{\bm{X}}_B$ are the $p$-vector  sample means of the covariates of clusters for  treatments A and B calculated with the first $2k$ clusters, and $\bm{X}$ is the  $m\times p$ matrix of the clusters' covariates for all $m$ clusters. To introduce our CAR procedure,  let $T_j$ be the treatment assignment for the $j$ th cluster, so that if $T_j =1$ then $Z_i=1$ for all  $i\in\mathcal{C}_j$,  and if $T_j =0$ then  $Z_i =0$ for all $i \in\mC_j$. Suppose   $\bm{X}_1,...,\bm{X}_m$ are observed before the randomization. We present  CAR    in  Procedure \ref{Cluster-Adaptive Randomization Procedure}.

The idea of  CAR is as follows.  CAR  sequentially assigns a pair of clusters, so that only one of the two clusters is assigned to treatment A. Resulting from the pairwise sequential treatment assignments, there are only  two possible outcomes for the assignment and  two possible values for the  Mahalanobis distance, i.e., $M_{2k}^{(1)}$, and $M_{2k}^{(2)}$. The procedure assigns   a higher probability $q$ to the assignment  which leads to the smaller value of the Mahalanobis distance. For example,  $q=0.85$ is used through out this  paper. The covariates' imbalance of the clusters is then minimized for a large value of  $m$.   As  balanced  treatment arms of clusters are generated by CAR,  this  randomization procedure can improve the efficiency of estimation, as will be shown in Section \ref{Section: Thoeretical results}.

\subsection{Cluster-Adjusted Estimator (CAE)}

\label{Section: Cluster-adjusted estimator}

\noindent

Once the CAR has been done, each of the users in the same cluster will have the same treatment assignment. Therefore, the network  effect obtained from the users in the same cluster is removed.  To eliminate the network effect obtained from linked clusters, define  $\mU$ as the uncontaminated set of vertices, that is 
$$\mU =  \{i: \bA_{i*} (\boldsymbol{1}-\bZ) = 0 \text{ and } Z_i =1\} \cup \{i: \bA_{i*} \bZ = 0  \text{ and } Z_i =0\}, \quad i= 1,...,n.$$
Then the adjusted estimator taking the  sample averages by using  the users in $\mU$ should be consistent. However, this estimator does not take the cluster effect into account. If the cluster effect  is not adjusted appropriately, the estimation of the ATE can still be affected. For example, consider a network with several clusters, where one cluster has  the largest cluster size and  all other clusters are about the same size. In this case,  the assignment of the largest cluster can affect the estimation of the ATE, due to the inflation of the cluster effect of this cluster. The intuition behind this is  that  the largest cluster will have more weight than the other clusters, if  the sample average is used.

To introduce our adjusted estimator, let  $n_j(\mathcal{U}) \ne0$ and $\bar{Y}_j(\mathcal{U})$ be the number of  uncontaminated users and the sample average of the uncontaminated users in  the $j$ th cluster, respectively. The cluster-adjusted estimator (CAE) is proposed   as  follows, 
\begin{align}
	\hat{\tau}_{CAE} &= \frac{1}{m_A}\cdot \sum_{j=1}^{m}T_j \bar{Y}_j(\mathcal{U})- \frac{1}{m_B}\cdot\sum_{j=1}^m (1-T_j) \bar{Y}_j(\mathcal{U}), \label{CAE}
 \end{align}
where  $m_A$,  $m_B $ are the number of clusters assigned to treatment A and B, respectively. It can be seen from (\ref{CAE}) that the cluster effect will not affect the estimation of the ATE, because the averages taken within each of the clusters remove the inflation of the cluster effect caused by unequaled cluster sizes. 


\subsection{Theoretical Properties}
\label{Section: Thoeretical results}

\noindent

To show the good properties of   our cluster-adaptive network A/B testing procedure, we first present  the balance properties of CAR in  Theorem \ref{Th2}.   

\begin{theorem}[Balance property of Cluster-Adaptive Randomization] 
	\label{Th2}
	Let $M_{m}$ be the Mahalanobis distance after the assignment of  $m$ clusters under CAR, then 
\begin{equation}
\label{th2.eq1}
	\begin{array}{cccc}
	M_{m} = O_p\left(\frac{1}{m}\right), & 
	and 
	& M_{m} \xrightarrow{\mathscr{P}} 0, &\text{ as } m \rightarrow \infty.
	\end{array}
	\end{equation}
\end{theorem}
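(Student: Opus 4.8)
The plan is to reduce the statement to a one–dimensional stochastic recursion and analyze its drift. Suppose first that $m=2k$ is even. Because CAR processes the clusters in consecutive pairs and sends exactly one cluster of each pair to arm A, after $2k$ clusters there are exactly $k$ clusters in each arm. Write $\boldsymbol{d}_j=\bX_{2j-1}-\bX_{2j}$ for the within-pair covariate difference and $\delta_j\in\{+1,-1\}$ for the sign recording which of $(T_{2j-1},T_{2j})=(1,0)$ or $(0,1)$ was chosen for pair $j$; then $\bar{\bX}_A-\bar{\bX}_B=\frac1k\sum_{j=1}^k\delta_j\boldsymbol{d}_j=:\frac1k S_k$, so by (\ref{Ma.dist})
\begin{equation*}
M_m=\frac{k}{2}\Big(\tfrac1k S_k\Big)^{\!\text{t}}\ocov(\bX)^{-1}\Big(\tfrac1k S_k\Big)=\frac{1}{m}\,R_k^2,\qquad R_k:=\big\|S_k\big\|,\quad \|v\|^2:=v^{\text{t}}\Sigma^{-1}v,
\end{equation*}
where $\Sigma$ is the (population) covariance of the cluster covariates; under the paper's moment conditions the sample covariance used in the algorithm converges to $\Sigma$, so I argue with $\Sigma$ throughout, and since the true and pseudo Mahalanobis distances are proportional the $O_p$ order is unaffected. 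It therefore suffices to show $R_k=O_p(1)$ \emph{uniformly in} $k$; then $M_m=R_{m/2}^2/m=O_p(1/m)\to 0$ in probability. For odd $m=2k+1$ the last cluster is assigned by an independent fair coin, which changes $\bar{\bX}_A-\bar{\bX}_B$ by $\frac{S_k}{k}+O_p(1/k)$; using $R_k=O_p(1)$ this again gives $M_m=O_p(1/m)$, so the odd case follows from the even one.

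Next I set up the drift recursion. Let $\mathcal{F}_{k-1}$ be the $\sigma$-field generated by the covariates and assignments of the first $k-1$ pairs, and put $g_k:=S_{k-1}^{\text{t}}\Sigma^{-1}\boldsymbol{d}_k$. From $S_k=S_{k-1}+\delta_k\boldsymbol{d}_k$ we get the exact identity $R_k^2=R_{k-1}^2+\|\boldsymbol{d}_k\|^2+2\delta_k g_k$. Reading off Procedure~\ref{Cluster-Adaptive Randomization Procedure}, the two candidate pseudo-distances differ exactly through the sign of $\delta_k g_k$, so CAR takes $\delta_k=-\operatorname{sgn}(g_k)$ with probability $q$ and $\delta_k=+\operatorname{sgn}(g_k)$ with probability $1-q$ (each with probability $1/2$ when $g_k=0$); hence $\mE[\delta_k g_k\mid\mathcal{F}_{k-1},\boldsymbol{d}_k]=-(2q-1)\lvert g_k\rvert$. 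Since $\boldsymbol{d}_k$ is independent of $\mathcal{F}_{k-1}$ with $\mE\boldsymbol{d}_k=\boldsymbol{0}$ and $\ovar(\boldsymbol{d}_k)=2\Sigma$, one computes $\mE[\|\boldsymbol{d}_k\|^2\mid\mathcal{F}_{k-1}]=\tr(\Sigma^{-1}\ovar(\boldsymbol{d}_k))=2p$ and $\mE[g_k^2\mid\mathcal{F}_{k-1}]=2R_{k-1}^2$, and a bounded-kurtosis/moment condition on the covariates converts the latter (via Hölder's inequality) into a lower bound $\mE[\lvert g_k\rvert\mid\mathcal{F}_{k-1}]\ge c_0 R_{k-1}$ for a constant $c_0>0$ (for elliptically distributed covariates $c_0$ is explicit). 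Applying the concavity bound $\sqrt{a+b}\le\sqrt a+b/(2\sqrt a)$ to the identity above (on $\{R_{k-1}>0\}$) and taking conditional expectations gives, with $a:=(2q-1)c_0>0$ (positive because $q>1/2$),
\begin{equation*}
\mE[R_k\mid\mathcal{F}_{k-1}]\;\le\;R_{k-1}+\frac{p}{R_{k-1}}-a\quad\text{on }\{R_{k-1}>0\},\qquad\text{and}\qquad \mE[R_k\mid\mathcal{F}_{k-1}]\le R_{k-1}+\sqrt{2p}\ \ \text{always}
\end{equation*}
(the second from the reverse triangle inequality, $\lvert R_k-R_{k-1}\rvert\le\|\boldsymbol{d}_k\|$). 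In particular $\mE[R_k\mid\mathcal{F}_{k-1}]\le R_{k-1}-a/2$ whenever $R_{k-1}>2p/a$.

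These two inequalities exhibit $(R_k)$ as a nonnegative stochastic recursion with uniformly negative drift outside the bounded region $\{r\le 2p/a\}$ and with increments $\lvert R_k-R_{k-1}\rvert\le\|\boldsymbol{d}_k\|$ whose conditional second moments are bounded ($\le 2p$). A standard stability/drift (Foster–Lyapunov–Pemantle–Rosenthal–Hajek type) argument for such recursions then yields $\sup_k\mE[R_k]<\infty$, hence $R_k=O_p(1)$ uniformly in $k$, which by the first paragraph completes the proof of (\ref{th2.eq1}).

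I expect the crux to be the step $\mE[g_k^2\mid\mathcal{F}_{k-1}]=2R_{k-1}^2\ \Rightarrow\ \mE[\lvert g_k\rvert\mid\mathcal{F}_{k-1}]\gtrsim R_{k-1}$. Without it, the naive drift of $R_k^2$ is only of order $-\mE\lvert g_k\rvert=-O(R_{k-1})$, i.e.\ sublinear in $R_{k-1}^2$, and Jensen sends $\mE\sqrt{R_{k-1}^2}\le\sqrt{\mE R_{k-1}^2}$ the wrong way; passing to the scale $R_k=\sqrt{R_k^2}$ together with this lower bound is exactly what produces a \emph{constant} negative drift, and it genuinely requires a non-degeneracy/moment assumption on the cluster covariates — if $\boldsymbol{d}_k$ could be nearly $\Sigma^{-1}$-orthogonal to $S_{k-1}$ with high probability, CAR would make no progress. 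Secondary technicalities are the bookkeeping near $R_{k-1}=0$, the odd-$m$ correction, and replacing the sample covariance by $\Sigma$.
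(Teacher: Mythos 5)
Your proposal is correct and follows essentially the same route as the paper: both reduce $m M_m$ (up to constants and the sample-vs-population covariance) to the squared $\Sigma^{-1}$-norm of the cumulative pairwise imbalance vector, exploit the exact identity for its increment to show the biased coin ($q>1/2$) produces a negative conditional drift proportional to the current norm --- your lower bound $\mE[\lvert g_k\rvert\mid\mathcal{F}_{k-1}]\ge c_0 R_{k-1}$ is precisely the paper's asserted positive constant $\mE[\lvert\cos\theta\rvert]$ in its factorization of $\bm{\eta}_m^{\text{t}}\mE_m[\bm{\Delta}_{m+2}]$ --- and conclude tightness from a drift condition. The only divergence is the last step, where the paper invokes Harris recurrence of the Markov chain $\{\bm{\eta}_{2k}\}$ via the Meyn--Tweedie drift criterion on $V(\bm{\eta})=\bm{\eta}^{\text{t}}\bm{\eta}$, while you pass to $R_k$ itself and cite a Foster--Lyapunov/Pemantle--Rosenthal bound; this is a difference of machinery, not of approach.
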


The proof of Theorem \ref{Th2} utilizes the drift condition, i.e., see Chapter 11 of \cite{meyn2012markov}, to verify that $\lbrace kM_{k}\rbrace_{k=1}^{\infty}$ is a Harris recurrent markov chain. This goal can be achieved by showing that the increment of $M_{m}$ in each step of the pairwise treatment assignments  in CAR is bounded by some positive constant. We next study the consistency of CAE in the  next theorem.

	\begin{theorem}[Consistency of Cluster-Adjusted Estimator] 
		\label{Th1}
	Under any  randomization performed  on clusters,  the cluster-adjusted estimator eliminates the network effect, i.e., 
	\begin{align}
		\hat{\tau}_{CAE}& = \mu_1-\mu_0 + \frac{1}{m_A}\sum_{j=1}^m T_j\bar{W}_j (\mathcal{U})- \frac{1}{m_B} \sum_{j=1}^{m}(1-T_j)\bar{W}_j(\mathcal{U}), \label{th1.presentation}\\
			where \ \ \bar{W}_j(\mathcal{U}) & =  \frac{1}{n_j(\mathcal{U})} \sum_{i\in \mathcal{C}_j\cap \mathcal{U}} (\bm{X}^{\text{t}}\bm{\beta}+ \epsilon_i). \nonumber
	\end{align}
Furthermore, if the employed  randomization procedure  satisfies the following balance condition,
\begin{equation}
\label{ass1}
\begin{array}{cccc}
\frac{1}{m} \sum_{j=1}^{m}T_j \stackrel{\mathscr{P}}{\longrightarrow}\frac{1}{2},& and & \frac{1}{m}\sum_{j=1}^{m}(2T_j-1)\bm{X}_j\stackrel{\mathscr{P}}{\longrightarrow}0, & as \ m \to \infty,
\end{array}
\end{equation} then the  cluster-adjusted estimator is consistent,  that is 
\begin{align*}
\begin{array}{cc}
\hat{\tau}_{CAE}\stackrel{\mathscr{P}}{\longrightarrow} \tau, & as\ m \to \infty.
\end{array}
\end{align*}
\end{theorem}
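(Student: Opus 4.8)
The plan is to proceed in two stages: first establish the algebraic decomposition \eqref{th1.presentation}, then use the balance condition \eqref{ass1} to kill the residual terms.

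For the first stage, I would substitute the response model \eqref{Y.eq} directly into the definition \eqref{CAE} of $\hat{\tau}_{CAE}$. The key observation is that for a user $i$ in the uncontaminated set $\mathcal{U}$, the spill-over terms vanish: if $Z_i=1$ then $\bA_{i*}(\bZ-\boldsymbol{1})=0$ (all neighbors are in A) so $\alpha_1\bA_{i*}(\bZ-\boldsymbol{1})Z_i=0$, and the $\alpha_0$ term carries the factor $(Z_i-1)=0$; symmetrically if $Z_i=0$. Hence for $i\in\mathcal{C}_j\cap\mathcal{U}$ the response reduces to $Y_i = \mu_1 Z_i + \mu_0(1-Z_i) + \bbeta^{\text{t}}\bX_j + \epsilon_i$, which equals $\mu_1 + \bbeta^{\text{t}}\bX_j + \epsilon_i$ on a cluster assigned to A (since all its uncontaminated users have $Z_i=1$) and $\mu_0 + \bbeta^{\text{t}}\bX_j + \epsilon_i$ on a cluster assigned to B. Averaging within cluster $j$ then gives $\bar{Y}_j(\mathcal{U}) = \mu_1 T_j + \mu_0(1-T_j) + \bar{W}_j(\mathcal{U})$ with $\bar{W}_j(\mathcal{U})$ as defined. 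Plugging this into \eqref{CAE}, the $\mu_1,\mu_0$ parts telescope to $\mu_1-\mu_0$ (using $\frac{1}{m_A}\sum_j T_j\cdot T_j = \frac{1}{m_A}\sum_j T_j=1$ and likewise for B), leaving exactly \eqref{th1.presentation}.

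For the second stage, write the residual $R_m := \frac{1}{m_A}\sum_j T_j\bar{W}_j(\mathcal{U}) - \frac{1}{m_B}\sum_j (1-T_j)\bar{W}_j(\mathcal{U})$ and split $\bar{W}_j(\mathcal{U}) = \bbeta^{\text{t}}\bX_j + \bar{\epsilon}_j(\mathcal{U})$, where $\bar{\epsilon}_j(\mathcal{U})$ is the mean of the errors over $\mathcal{C}_j\cap\mathcal{U}$. The deterministic-in-$\bX$ part contributes $\bbeta^{\text{t}}\left(\frac{1}{m_A}\sum_j T_j\bX_j - \frac{1}{m_B}\sum_j(1-T_j)\bX_j\right)$; rewriting $\frac{1}{m_A} = \frac{1}{m}\cdot\frac{m}{m_A}$ and using $m_A/m = \frac{1}{m}\sum_j T_j \xrightarrow{\mathscr{P}} 1/2$ (first part of \eqref{ass1}) together with $\frac{1}{m}\sum_j(2T_j-1)\bX_j\xrightarrow{\mathscr{P}}0$ (second part of \eqref{ass1}), this term is $o_p(1)$ by the continuous mapping / Slutsky theorem. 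The error part $\frac{1}{m_A}\sum_j T_j\bar{\epsilon}_j(\mathcal{U}) - \frac{1}{m_B}\sum_j(1-T_j)\bar{\epsilon}_j(\mathcal{U})$ is a mean of at most $m$ mean-zero, independent (conditionally on the assignment and network) random variables each with bounded variance $\sigma_\epsilon^2/n_j(\mathcal{U}) \le \sigma_\epsilon^2$, so it is $O_p(1/\sqrt{m}) = o_p(1)$; a conditioning argument handles any dependence between the $T_j$ and the $\epsilon_i$, since the $\epsilon_i$ are independent of the randomization. Combining, $R_m\xrightarrow{\mathscr{P}}0$, hence $\hat{\tau}_{CAE}\xrightarrow{\mathscr{P}}\mu_1-\mu_0=\tau$.

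The main obstacle is the bookkeeping around $\mathcal{U}$ and $n_j(\mathcal{U})$: one must be careful that $n_j(\mathcal{U})\ge 1$ for the clusters actually entering the sums (this is assumed in the statement), and that "uncontaminated" genuinely forces the spill-over terms to zero for \emph{every} user in $\mathcal{C}_j\cap\mathcal{U}$, so that the within-cluster average is clean — this is where model \eqref{Y.eq} and the definition of $\mathcal{U}$ interlock. A secondary subtlety is justifying the $o_p(1)$ rate for the error term uniformly in how $m_A$ grows; this is routine once one notes $m_A/m\to 1/2$ keeps $m_A$ of order $m$ with probability tending to one, so dividing by $m_A$ is asymptotically the same as dividing by $m/2$.
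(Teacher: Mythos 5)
Your proposal is correct and follows essentially the same route as the paper's proof: the definition of $\mathcal{U}$ (together with cluster-level assignment) annihilates the spill-over terms to yield the representation \eqref{th1.presentation}, and then the balance condition \eqref{ass1} plus Slutsky handles the covariate part while a Chebyshev/variance bound handles the error part. Your explicit conditioning remark for the independence of the $T_j$ and the $\epsilon_i$ is slightly more careful than the paper's, but the argument is the same.
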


Theorem \ref{Th1} shows that the cluster-adjusted estimator can not only eliminate the network effect, but is consistent if the balance condition (\ref{ass1}) is satisfied. 

\begin{corollary}
\begin{enumerate}[leftmargin=*]
\item Under the graph cluster randomization \cite{Ugander2013}, i.e., performing the complete randomization on clusters (CRC), equation (\ref{ass1}) is satisfied, then  CAE is consistent.
    
 \item  Under CAR, equation (\ref{ass1}) is satisfied, then  CAE is consistent. 
\end{enumerate}
\label{Cor1}
\end{corollary}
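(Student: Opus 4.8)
The plan is to derive both parts of Corollary~\ref{Cor1} directly from Theorem~\ref{Th1}: since that theorem already gives $\hat\tau_{CAE}\xrightarrow{\mathscr{P}}\tau$ whenever the balance condition (\ref{ass1}) holds, all that remains is to verify (\ref{ass1}) for each of the two randomization schemes. Throughout I would use the regularity assumed for the clusters' covariates in Theorems~\ref{Th2}--\ref{Th1}, namely that $\operatorname{cov}(\bX)$ has a positive-definite limit (so its smallest eigenvalue is bounded away from zero) and that $m^{-1}\sum_{j=1}^m\|\bX_j\|^2$ stays bounded.

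For part (1), under graph cluster (complete) randomization the labels $T_1,\dots,T_m$ are assigned either by selecting a fixed number of clusters uniformly at random for A, or by i.i.d.\ fair coins; in both cases $m^{-1}\sum_j T_j=m_A/m\to 1/2$ (exactly in the fixed-size version, by the weak law in the i.i.d.\ version), which is the first half of (\ref{ass1}). For the second half, set $\bar S_m:=m^{-1}\sum_{j=1}^m(2T_j-1)\bX_j$. Since $\mE[2T_j-1]=0$ and $\operatorname{var}(2T_j-1)=1$ with the $T_j$ independent (or negatively associated, contributing only an $O(1/m)$ finite-population correction), one gets $\mE[\bar S_m]=0$ and, coordinatewise, $\operatorname{var}(\bar S_m^{(\ell)})=m^{-2}\sum_j (X_j^{(\ell)})^2+O(1/m)=O(1/m)$. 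Chebyshev's inequality then gives $\bar S_m\xrightarrow{\mathscr{P}}0$, so (\ref{ass1}) holds and Theorem~\ref{Th1} finishes part (1).

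For part (2), the pairwise construction in Procedure~\ref{Cluster-Adaptive Randomization Procedure} assigns exactly one cluster of each pair to A, so $|m_A-m_B|\le 1$ and again $m^{-1}\sum_j T_j=m_A/m\to 1/2$. For the covariate term I would combine Theorem~\ref{Th2} with (\ref{Ma.dist}): since $M_m = c\,\tfrac{m}{4}(\bar\bX_A-\bar\bX_B)^{\text{t}}\operatorname{cov}(\bX)^{-1}(\bar\bX_A-\bar\bX_B)$ for some $c>0$ and $M_m=O_p(1/m)$, bounding below by the smallest eigenvalue of $\operatorname{cov}(\bX)^{-1}$ yields $\|\bar\bX_A-\bar\bX_B\|^2=O_p(1/m^2)$, in particular $\bar\bX_A-\bar\bX_B\xrightarrow{\mathscr{P}}0$. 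Then, writing
\[
\bar S_m=\frac{m_A}{m}\bar\bX_A-\frac{m_B}{m}\bar\bX_B=\frac{m_A}{m}\bigl(\bar\bX_A-\bar\bX_B\bigr)+\frac{m_A-m_B}{m}\,\bar\bX_B,
\]
the first summand vanishes in probability (as $m_A/m\le 1$ and $\bar\bX_A-\bar\bX_B\to0$) and the second vanishes because $|m_A-m_B|/m\le 1/m\to 0$ while $\bar\bX_B$ is bounded (from $m_B\ge(m-1)/2$ and the bound on $m^{-1}\sum_j\|\bX_j\|$). Hence (\ref{ass1}) holds under CAR, and Theorem~\ref{Th1} completes part (2).

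The only genuinely non-routine step is the one in part (2): converting the Mahalanobis-distance decay of Theorem~\ref{Th2} into decay of the \emph{unweighted} difference $\bar\bX_A-\bar\bX_B$ that enters (\ref{ass1}), which relies on positive-definiteness of the limiting covariance and on handling the leftover cluster when $m$ is odd (so that $m_A\ne m_B$ is properly accounted for). Part (1) is a routine second-moment bound, the only mild subtlety being the finite-population correction when A is a simple random subset rather than an i.i.d.\ assignment.
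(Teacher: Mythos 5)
Your proposal is correct and follows essentially the same route as the paper: part (1) is the same WLLN/second-moment argument using independence of $T_j$ and $\bX_j$, and part (2) deduces the covariate balance in (\ref{ass1}) from Theorem~\ref{Th2} exactly as the paper indicates (the appendix proof just invokes the stationarity of $\bm{\eta}_m$ from Lemma~\ref{lm1} directly to get $\bar\bX_A-\bar\bX_B=O_p(1/m)$, which is equivalent to your Mahalanobis-plus-eigenvalue conversion). Your explicit treatment of the odd-$m$ leftover cluster and the fixed-$m_A$ sampling variant are minor refinements the paper sidesteps by assuming $m$ even and i.i.d.\ Bernoulli assignment.
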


Corollary \ref{Cor1} shows that CAE is consistent for CRC and CAR. Under CRC, each cluster is assigned to treatment A with probability 1/2. As no information of $\bX_j$ is used for randomization, $T_j$ is independent with $\bm{X}_j$. Therefore,  it is easy to verify that (\ref{ass1}) for CRC.  For CAR, (\ref{ass1}) follows from  Theorem \ref{Th2} that $\bar{\bX}_A-\bar{\bX}_B$ converges to zero in probability, as $m\to \infty$ and Slutsky's theorem.  It is obvious that CAE is still not consistent for complete randomization with users (CRU). The reason is that this user-level randomization procedure cannot remove the network spill-over effect obtained from the users within the  same cluster.

To compare the two consistent network A/B testing procedures, i.e., using CRC following with CAE, and using CAR following with CAE,  we introduce the  percent reduction in variance (PRIV) of an estimator $\hat{\tau}$ to  compare CAR with CRC, 
\begin{equation}
PRIV(\hat\tau|CAR) = \frac{\ovar[\hat\tau|CRC] - \ovar[\hat\tau|CAR]}{\ovar[\hat\tau|CRC]}. 
\end{equation}
It is obvious that if $PRIV(\hat\tau|CAR)>0$ then the new procedure is better than CRC. The PRIV of the  CAE,  comparing the randomization procedures CAR and  CRC, is studied in the next theorem. 

\begin{theorem}
	\label{Th3}
If the  cluster effect exists ($\bbeta\ne 0$),  then
\begin{align}
PRIV(\hat\tau_{CAE}|CAR) & \geq \left(1-\frac{\mathbb{E}[M_m|CAR]}{p}\right)R^2_C\to R^2_C ~\text{ as } m \rightarrow \infty, \label{lb.eq}
\end{align}
		where $R_C^2$ is the squared multiple correlation between $Y^*_j$ and $\bX_j$ within the treatment group. Here,  $Y^*_j$ is  a pseudo response generated by the model without network effect,  i.e.,  
		\begin{equation}
		    \begin{array}{ccc}
		    	Y^C_j = \mu_0(1-T_j) + \mu_1 T_j + \bbeta' \bX_j + \epsilon_j, &for& j= 1,...,m.
		    \end{array} \label{psedo}
		\end{equation}
	
\end{theorem}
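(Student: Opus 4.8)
\noindent The plan is to split $\hat\tau_{CAE}$ into a piece carried by the cluster covariates and a piece carried by the noise $\epsilon_i$, to show that only the covariate piece reacts to the choice between CAR and CRC, and then to express the variance of that piece through the Mahalanobis distance $M_m$. Since the covariates are constant within a cluster, Theorem~\ref{Th1} lets me write $\bar{W}_j(\mathcal{U})=\bX_j^{\text{t}}\bbeta+\bar{\epsilon}_j(\mathcal{U})$, and since both designs produce $m_A=m_B=m/2$ by construction,
\begin{equation*}
\hat\tau_{CAE}-(\mu_1-\mu_0)=(\bar{\bX}_A-\bar{\bX}_B)^{\text{t}}\bbeta+(\bar{\epsilon}_A-\bar{\epsilon}_B),
\end{equation*}
where $\bar{\epsilon}_A-\bar{\epsilon}_B$ is a cluster-weighted average of mean-zero errors. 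The design uses only the covariates, so $\bT=(T_1,\dots,T_m)$ is independent of the errors; conditioning on $\bT$, the first term is $\bT$-measurable while $\mathbb{E}[\bar{\epsilon}_A-\bar{\epsilon}_B\mid\bT]=0$, so the cross term in the variance vanishes and $\ovar[\hat\tau_{CAE}\mid R]=\ovar[(\bar{\bX}_A-\bar{\bX}_B)^{\text{t}}\bbeta\mid R]+V_\epsilon$ for $R\in\{CRC,CAR\}$, with the noise piece $V_\epsilon$ common to both designs (here I would either pass to the cluster-level pseudo-response~(\ref{psedo}), for which $V_\epsilon=\tfrac{4}{m}\sigma^2_\epsilon$ exactly, or check that the discrepancy coming from the weights $n_j(\mathcal{U})$ is of smaller order). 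Because the best linear predictor of $Y^C_j$ from $\bX_j$ is exactly its mean function, $R^2_C$ is precisely $\ovar[(\bar{\bX}_A-\bar{\bX}_B)^{\text{t}}\bbeta\mid CRC]/\big(\ovar[(\bar{\bX}_A-\bar{\bX}_B)^{\text{t}}\bbeta\mid CRC]+V_\epsilon\big)$, so
\begin{equation*}
PRIV(\hat\tau_{CAE}\mid CAR)=R^2_C\,\Big(1-\frac{\ovar[(\bar{\bX}_A-\bar{\bX}_B)^{\text{t}}\bbeta\mid CAR]}{\ovar[(\bar{\bX}_A-\bar{\bX}_B)^{\text{t}}\bbeta\mid CRC]}\Big),
\end{equation*}
and everything reduces to bounding this variance ratio by $\mathbb{E}[M_m\mid CAR]/p$.

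For the denominator, $\bar{\bX}_A-\bar{\bX}_B=\tfrac{2}{m}\sum_j(2T_j-1)\bX_j$ with $\{2T_j-1\}$ a uniformly random balanced sign vector, so its covariance is $\tfrac{4}{m}\ocov(\bX)$ and $\ovar[(\bar{\bX}_A-\bar{\bX}_B)^{\text{t}}\bbeta\mid CRC]=\tfrac{4}{m}\bbeta^{\text{t}}\ocov(\bX)\bbeta$; plugging this into~(\ref{Ma.dist}) gives $\mathbb{E}[M_m\mid CRC]=\tfrac{m}{4}\tr\!\big(\ocov(\bX)^{-1}\cdot\tfrac{4}{m}\ocov(\bX)\big)=p$, which is where the $p$ enters. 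For the numerator, CAR is invariant under interchanging the two treatment labels (it uses only the label-symmetric Mahalanobis distance and a symmetric tie rule), so $\mathbb{E}[\bar{\bX}_A-\bar{\bX}_B\mid CAR]=0$ and the numerator equals $\mathbb{E}\big[((\bar{\bX}_A-\bar{\bX}_B)^{\text{t}}\bbeta)^2\mid CAR\big]$. Whitening via $\boldsymbol{W}=\ocov(\bX)^{-1/2}(\bar{\bX}_A-\bar{\bX}_B)$, so that $M_m=\tfrac{m}{4}\|\boldsymbol{W}\|^2$, and taking $v=\ocov(\bX)^{1/2}\bbeta/\|\ocov(\bX)^{1/2}\bbeta\|$, the ratio equals $\tfrac{m}{4}\,\mathbb{E}[(\boldsymbol{W}^{\text{t}}v)^2\mid CAR]$, and the claim becomes
\begin{equation*}
\mathbb{E}[(\boldsymbol{W}^{\text{t}}v)^2\mid CAR]\;\le\;\tfrac1p\,\mathbb{E}[\|\boldsymbol{W}\|^2\mid CAR]\;=\;\tfrac{4}{mp}\,\mathbb{E}[M_m\mid CAR].
\end{equation*}
Once this is in hand the stated bound follows; and Theorem~\ref{Th2} gives $M_m\xrightarrow{\mathscr{P}}0$ with $M_m=O_p(1/m)$, while the drift estimates behind Theorem~\ref{Th2} also yield uniform integrability of $\{M_m\}$, so $\mathbb{E}[M_m\mid CAR]\to0$ and the bound tends to $R^2_C$.

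The hard part is the ``$/p$'': controlling the variance of $(\bar{\bX}_A-\bar{\bX}_B)^{\text{t}}\bbeta$ along an \emph{arbitrary} direction $\bbeta$ by the \emph{aggregate} imbalance $\mathbb{E}[M_m\mid CAR]$, which amounts to showing that CAR does not pile its residual imbalance into any single direction --- the analogue, for rerandomization, of the equal-percent-variance-reducing property established by \cite{morgan2012rerandomization}. A crude Cauchy--Schwarz bound only gives $\mathbb{E}[(\boldsymbol{W}^{\text{t}}v)^2\mid CAR]\le\mathbb{E}[\|\boldsymbol{W}\|^2\mid CAR]$, hence the weaker factor $\mathbb{E}[M_m\mid CAR]$ in place of $\mathbb{E}[M_m\mid CAR]/p$; so one really needs (at least asymptotic) isotropy of the whitened imbalance $\ocov(\boldsymbol{W}\mid CAR)$, extracted from the sequential greedy structure of Procedure~\ref{Cluster-Adaptive Randomization Procedure}. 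By contrast the cross-term cancellation, the identity $\mathbb{E}[M_m\mid CRC]=p$, and the uniform-integrability step should be comparatively routine.
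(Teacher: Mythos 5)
Your reduction is essentially the paper's: you split $\hat\tau_{CAE}-\tau$ into the covariate-imbalance term $(\bar{\bX}_A-\bar{\bX}_B)^{\text{t}}\bbeta$ plus a noise term whose variance does not depend on the design, observe that only the imbalance term distinguishes CAR from CRC, and note that the whole theorem then hinges on bounding $\ovar[(\bar{\bX}_A-\bar{\bX}_B)^{\text{t}}\bbeta\mid CAR]/\ovar[(\bar{\bX}_A-\bar{\bX}_B)^{\text{t}}\bbeta\mid CRC]$ by $\mathbb{E}[M_m\mid CAR]/p$. (One small repair: for the actual CAE the noise variance is $\tfrac{4}{m^2}\sum_j\sigma_\epsilon^2/n_j(\mathcal{U})\le\tfrac{4}{m}\sigma_\epsilon^2$, not equal to the pseudo-response value, so your displayed identity for $PRIV$ should be a ``$\geq$''; this is exactly the inequality the paper exploits via $\ovar[\hat\tau_{CAE}\mid CRC]\le\ovar[\hat\tau^C\mid CRC]$, and it is why the bound is stated as a lower bound rather than an equality.)

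The genuine gap is the one you yourself flag: the directional step $\mathbb{E}[(\boldsymbol{W}^{\text{t}}v)^2\mid CAR]\le\tfrac{1}{p}\,\mathbb{E}[\|\boldsymbol{W}\|^2\mid CAR]$, i.e.\ that CAR's residual whitened imbalance is (asymptotically) isotropic so that the aggregate Mahalanobis reduction translates into an equal-percent reduction along the particular direction $\bbeta$. Without it you only get the factor $\mathbb{E}[M_m\mid CAR]$ instead of $\mathbb{E}[M_m\mid CAR]/p$, which is a strictly weaker (though still convergent) bound, so the theorem as stated is not established. The paper does not prove this isotropy from the greedy structure of Procedure~\ref{Cluster-Adaptive Randomization Procedure} either; it closes the step by invoking Theorem~3.2 of \cite{qin2016pairwise}, which establishes the identity $PRIV(\hat\tau^C\mid CAR)=\left(1-\mathbb{E}[M_m\mid CAR]/p\right)R_C^2$ for pairwise sequential randomization (the analogue of the equal-percent-variance-reducing property of \cite{morgan2012rerandomization}), and then transfers it to $\hat\tau_{CAE}$ by the constant-shift argument $\ovar[\hat\tau_{CAE}\mid *]-\ovar[\hat\tau^C\mid *]=C_1\sigma_\epsilon^2$ with $C_1\le 0$ independent of the design. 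So to complete your proof you should either cite that result where you currently say ``once this is in hand,'' or supply the isotropy argument yourself; everything else in your outline matches the paper's route.
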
    
The positive lower bound in (\ref{lb.eq}) is attained when each of the cluster has only one uncontaminated user.  In addition, this lower bound converges to $R_C^2$ as the  number of clusters goes to infinity. Therefore, the $PRIV(\hat{\tau}_{CAE}|CAR)$ can achieve  the maximum value of this lower bound and greatly  improve the  estimation efficiency. Furthermore, $R^2_C$  increases  as more covariates are included in the  pseudo response  model \cite{rao1973linear}. As a result, the efficiency of using CAR can be improved by using more covariates of clusters for randomization.

\section{The Hypothetical Network}
\label{Section: Hypothetical network}

\noindent

In this section we use a hypothetical network to show the advantages of our cluster-adaptive network A/B testing procedure.The simulated  network is generated as follows. We first use  the  Watts-Strogatz small-world model to generate 500 independent clusters  with no interactions. For $j = 1,\cdots,500$, we generate  the cluster size $n_j$ from a symmetric discrete distribution,  $\mP(n_j = \lambda) = \left[\left(|\lambda- 20|+ 0.5\right)\sum_{\lambda=10}^{30}(1/ \left(|\lambda- 20|+ 0.5\right)\right]^{-1}$, where $\lambda=10,...,30$. Therefore, $N = \sum_{j=1}^{500}n_j$ users are involved in this network.  In addition, $r\times N$ edges are randomly added, so that  several clusters are connected, where  $r\in\lbrace 0.1,0.5 \rbrace$ is chosen as the re-connection probability.

Consider the response $Y_i$ follows the assumed response model (\ref{Y.eq}) with $p=4$ and $\epsilon_i\sim N(0, 2^2)$.  The covariates of the cluster,  i.e., $\bX_j= (X_{j1},\cdots,X_{j4})^{\text{t}}$, involved in the response model are  the  number of vertices in $\mC_j$ ($X_{j1}$), the number of edges in $\mC_j$ ($X_{j2}$), the  number of edges in $\mC_j$ with other clusters ($X_{j3}$), and the density of $\mC_j$ ($X_{j4}$). In addition, consider $\alpha_1=-\alpha_0=\alpha \in \lbrace 0,0.25,0.5,0.75,1\rbrace$ for the network spill-over effect, $\beta_1= \cdots= \beta_4= 1$ for the effect of cluster covariates, and $\tau=\mu_1-\mu_0 = 1$  for the treatment effect.




To compare with other network A/B testing procedures, we consider  the classic A/B testing procedure, i.e., complete randomization with users (CRU),   following with the classical estimator (CE), and the  graph  clustering procedure using the  complete randomization on clusters (CRC), following with CE and CAE. For our CAR method, we consider two cases,  1)  CAR (2),  the CAR using the first  two  covariates   $X_{j1}$ and $X_{j2}$, and 2) CAR (4), the CAR using all of the four covariates $X_{j1},\cdots,X_{j4}$. Following the two scenarios of CAR, we consider both CE and CAE for estimation.  For the  simplicity of the presentation,  we present some summary statistics comparing the balance properties of the two cluster - level randomizations in Figure \ref{Figure 0: balance properties}.    We summarize the bias  and standard deviation of the estimated ATE under the settings of  $\alpha\in \lbrace 0,1\rbrace$, and $r\in \lbrace 0.1,0.5\rbrace$ in Table \ref{Tabel 1: bias and Variance of Hypothetical network}.  The $PRIV(\hat{\tau}_{CAE}|CAR)$ is calculated in Table \ref{Table 2: PRIV for the hypothetical}. Other results exploring different parameter settings are presented in Figures \ref{Figure 1: Bias and s.d. of  the Hypothetical network}. More detailed simulation results  can be found in the supplementary material.

\begin{figure}[H]
	\centering
	\includegraphics[width=1\linewidth]{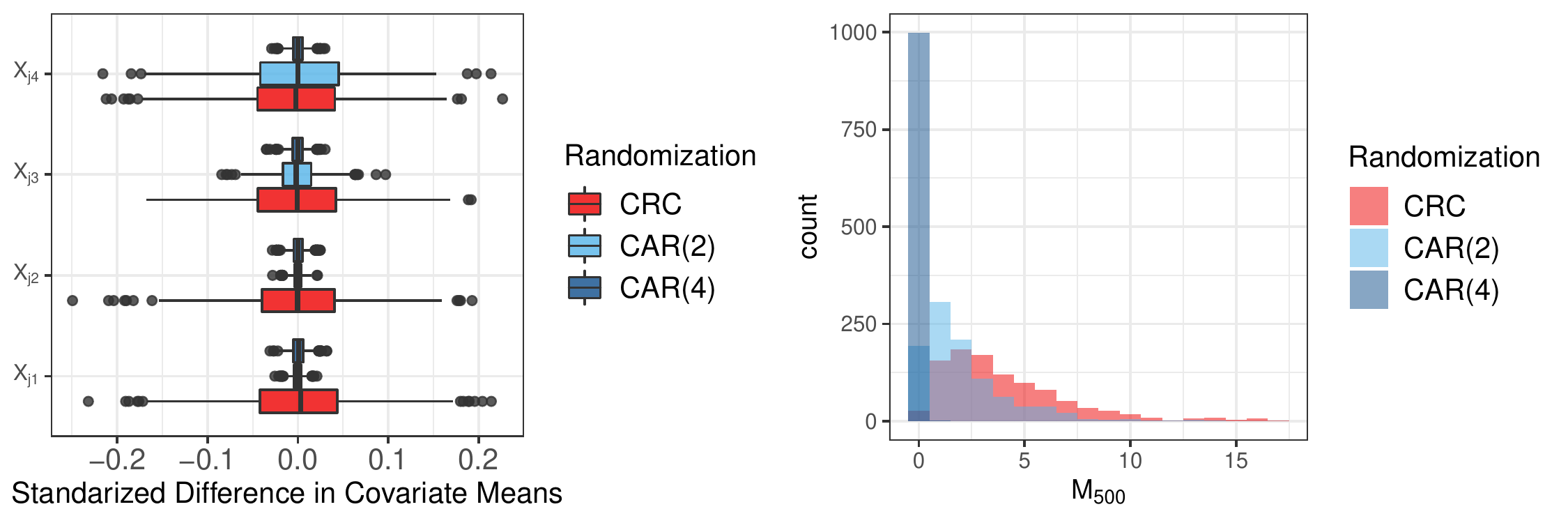}
		\caption{The standardized difference in means for  $\bX_j$ and the Mahalanobis distance when $\alpha=0.5$, $r=0.1$, given the hypothetical network, based on 1000 runs. The Mahalanobis distance is calculated for all four covariates of the clusters.   }
			\label{Figure 0: balance properties}
\end{figure}

\begin{figure}[H]
	\centering
	\includegraphics[width=1\linewidth]{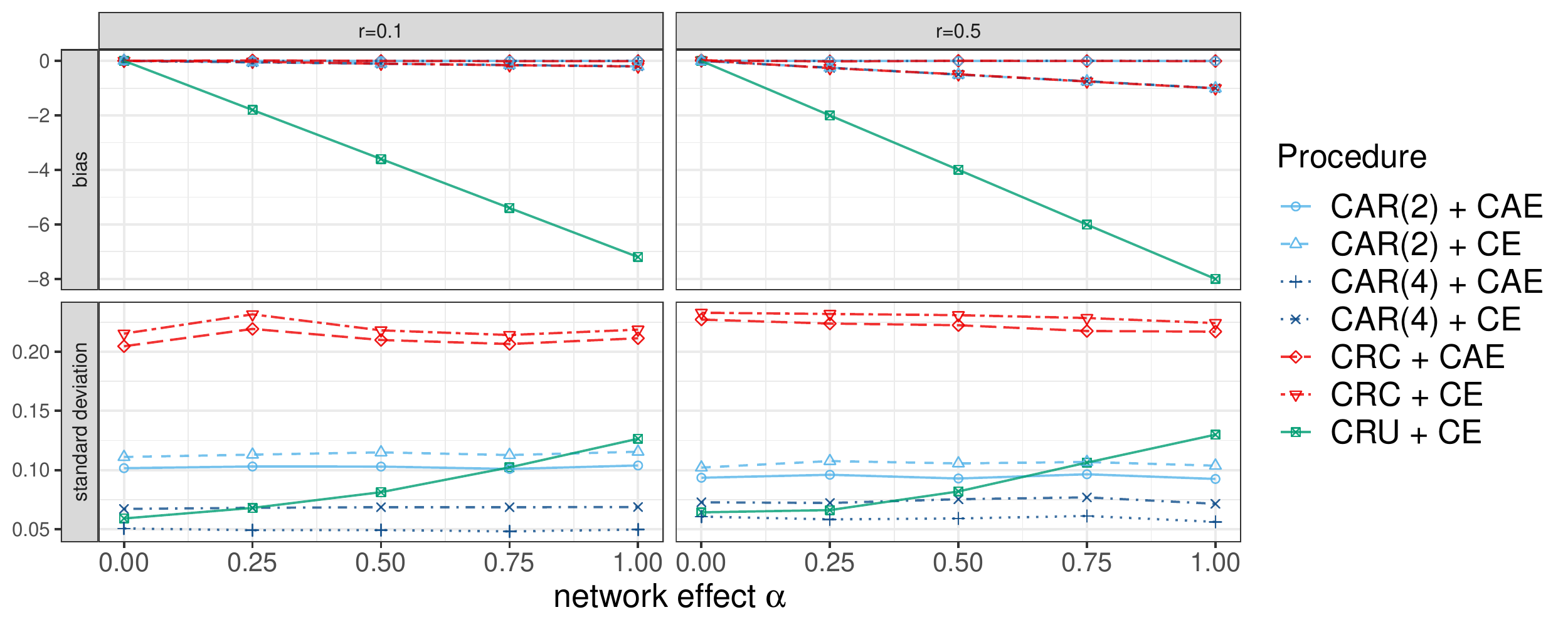}
		\caption{The bias and standard deviation of the estimated ATE versus network effect $\alpha$ given the hypothetical network, based on 1000 runs.}
			\label{Figure 1: Bias and s.d. of  the Hypothetical network}
			\vspace{-0.15in}
\end{figure}

\vspace{-0.1in}

\begin{table}[H]

	\centering
	\caption{The bias and standard deviation of the estimated of ATE under different A/B testing procedures given the hypothetical network,  based on 1000 runs.}
		\label{Tabel 1: bias and Variance of Hypothetical network}
				\begin{tabular}{cclcrrrrr}
			\hline
\multirow{2}{*}{$r$}&   \textit{Randomization} & \multirow{2}{*}{\textit{Estimator}  }  && \multicolumn{2}{c}{$\alpha = 0$} & & \multicolumn{2}{c}{$\alpha = 1$}  \\
			\cline{5-6} \cline{8-9}
			& \textit{procedure} &  && \textit{Bias} & \textit{s.d.}& & \textit{Bias} & \textit{s.d.} \\ 
		\cline{1-3}		\cline{5-6} \cline{8-9}
			\multirow{10}{*}{$0.1$} &\multirow{2}{*}{CAR ($2$)} & CAE   &     & 0.0057  & 0.10 && -0.0003 & 0.10 \\
	                               		&               & CE   &      & 0.0047  & 0.11 & & -0.2005 & 0.12 \\
	                               \cline{2-3}	\cline{5-6} \cline{8-9}
	&\multirow{2}{*}{CAR ($4$) } & CAE&   		& -0.0022 & 0.05 & & 0.0025  & 0.05 \\
				&           & CE   	&	& -0.0012 & 0.07 & & -0.1976 & 0.07 \\
			 \cline{2-3}	\cline{5-6} \cline{8-9}
			& \multirow{2}{*}{CRC} & CAE     &   & 0.0002  & 0.21 & & 0.0027  & 0.21 \\
			&               & CE   &     & 0.0012  & 0.22 &  & -0.1959 & 0.22 \\
		 \cline{2-3}	\cline{5-6} \cline{8-9}
			& CRU           & CE    &     & 0.0015  & 0.06 & & -7.2003 & 0.12 \\
		\cline{1-3}		\cline{5-6} \cline{8-9}
			\multirow{10}{*}{$0.5$}  &\multirow{2}{*}{CAR ($ 2$)}   &CAE &    & -0.0022 & 0.09 & & -0.0047 & 0.09\\ 
			&               & CE    &    & -0.0046 & 0.10 & & -1.0054 & 0.10 \\
		 \cline{2-3}	\cline{5-6} \cline{8-9}
&\multirow{2}{*}{CAR ($ 4$)}  & CAE   	&	& 0.0027  & 0.06&  & 0.0002  & 0.06 \\
			&               & CE   	&	& -0.0003 & 0.07 & & -1.0019 & 0.07 \\
		 \cline{2-3}	\cline{5-6} \cline{8-9}
			& \multirow{2}{*}{CRC} & CAE    &    & 0.0019  & 0.22 & & -0.0011 & 0.22 \\
			&               & CE   &      & -0.0002 & 0.23 & & -1.0033 & 0.23  \\
		 \cline{2-3}	\cline{5-6} \cline{8-9}
			& CRU           & CE    &     & -0.0003 & 0.06& & -7.9999 & 0.12 \\
			\hline
		\end{tabular}
\end{table}

\textbf{Results:} Figure \ref{Figure 0: balance properties} shows that the CAR produce more balanced clusters than CRC. The Mahalanobis distance calculated with all four covariates of the clusters is more concentrated at zero, when more covariates are used in CAR. In addition, it can be seen from the boxplot that use of the covariates in CAR will result in a small variance of a standardized difference of the covariates mean. These results show the superior balance properties of the  CAR procedure.

 \begin{wrapfigure}[9]{r}{3in} 
 \vspace{-0.3in}
\begin{minipage}[b]{0.95\linewidth}
	\begin{table}[H]
	\centering
	\caption{The $PRIV(\hat{\tau}_{CAE}|CAR)$ given  the  hypothetical network, based on 1000 runs. }
	\label{Table 2: PRIV for the hypothetical}
		\begin{tabular}{cccrrrrrrr}
	\hline
$r$	& CAR& &$\alpha=0$ & $\alpha= 1$\\
\cline{1-2} \cline{4-5}
	\multirow{2}{*}{0.1}& (2)&&75.67  & 74.93 \\
	                                             &(4)&& 93.86  & 94.46 \\ 
\cline{1-2} \cline{4-5}
	\multirow{2}{*}{0.5}	& (2)& & 82.67  & 82.95  \\
	          & (4)& &92.86  & 93.27  \\
	\hline
\end{tabular}
\end{table}

\end{minipage}

\end{wrapfigure}

It can be seen from Table \ref{Tabel 1: bias and Variance of Hypothetical network} and Figure \ref{Figure 1: Bias and s.d. of  the Hypothetical network} that all of the A/B testing procedures are consistent, i.e., bias is about zero, when there is no network effect. In this case, the classical A/B testing procedure works well.  As the network effect increases, the bias using the classical A/B testing procedure increases dramatically, i.e., the bias under CRU following with CE is above 7, when $\alpha=1$.

The network A/B testing procedures  using CE, i.e., CAR following with CE and CRC following with CE, are less biased than the classical method. Comparing the bias obtained from these procedures for  $r=0.1$ and $r=0.5$, it can be seen that these methods work well when the clusters are less connected, i.e., $r=0.1$. When the connections between the clusters increase, the biases of these procedures are not negligible. For example, when $r=0.5$, the bias of CAR following with CE is -0.1976. This is  because  the effects obtained from linked clusters, which can bias the estimation, are not removed if CE is used for estimation. On the other hand, the network A/B testing procedure using CAE,  i.e., CAR following with CAE and CRC using CAE, is consistent for all of the parameter settings.  These results justify Theorem \ref{Th1}  and suggest that if the network effect is not eliminated appropriately, the bias will still increase as the network effect increases.

From Table \ref{Tabel 1: bias and Variance of Hypothetical network} and Figure \ref{Figure 1: Bias and s.d. of  the Hypothetical network}, it can be seen that using  CAR  procedure can significantly reduce the standard deviation of the estimators of the ATE.  Comparing the performance of  CAR (2) and CAR (4), the  standard deviation decreases  as more covariates of the clusters are used in CAR. This suggests that the efficiency of the estimation will increase if more important features of the clusters can be used in  CAR.  Furthermore, Table \ref{Table 2: PRIV for the hypothetical} shows that the improvement of using CAR is above $70\%$. In particular, using all four covariates in CAR can improve at least $93\%$ of the variance for the estimation of ATE.  These results  suggest that using our  cluster-adaptive network A/B testing procedure is able to  not only reduce the bias but also improve the estimation efficiency.  

\section{The MIT Phone Call Network}
\label{Section: Real world Network}
\noindent

 	The MIT Phone Call Network listed in the Network Data Repository   \cite{datarepository} is used to illustrate how our cluster-adaptive network A/B testing procedure works for real world social networks. 
	The network consists of phone calls/voicemails between a group of users at MIT, where vertices and edges represent users and calls/voicemails, respectively \cite{eagle2006reality}.  In addition,  the network is labeled with 82 clusters via label propagation algorithm \cite{PhysRevE2007labelpropagation} (see Figure \ref{Figure 3: MIT Phone call network}).  The response of  users is generated with the same parameter  settings as  in Section \ref{Section: Hypothetical network}.  The  A/B testing procedures used in Section \ref{Section: Hypothetical network} are also compared for the MIT phone call network.

\begin{wrapfigure}{r}{0.42\textwidth}
	\vspace{-0.4in}
	\begin{center}
		\includegraphics[width=1\linewidth]{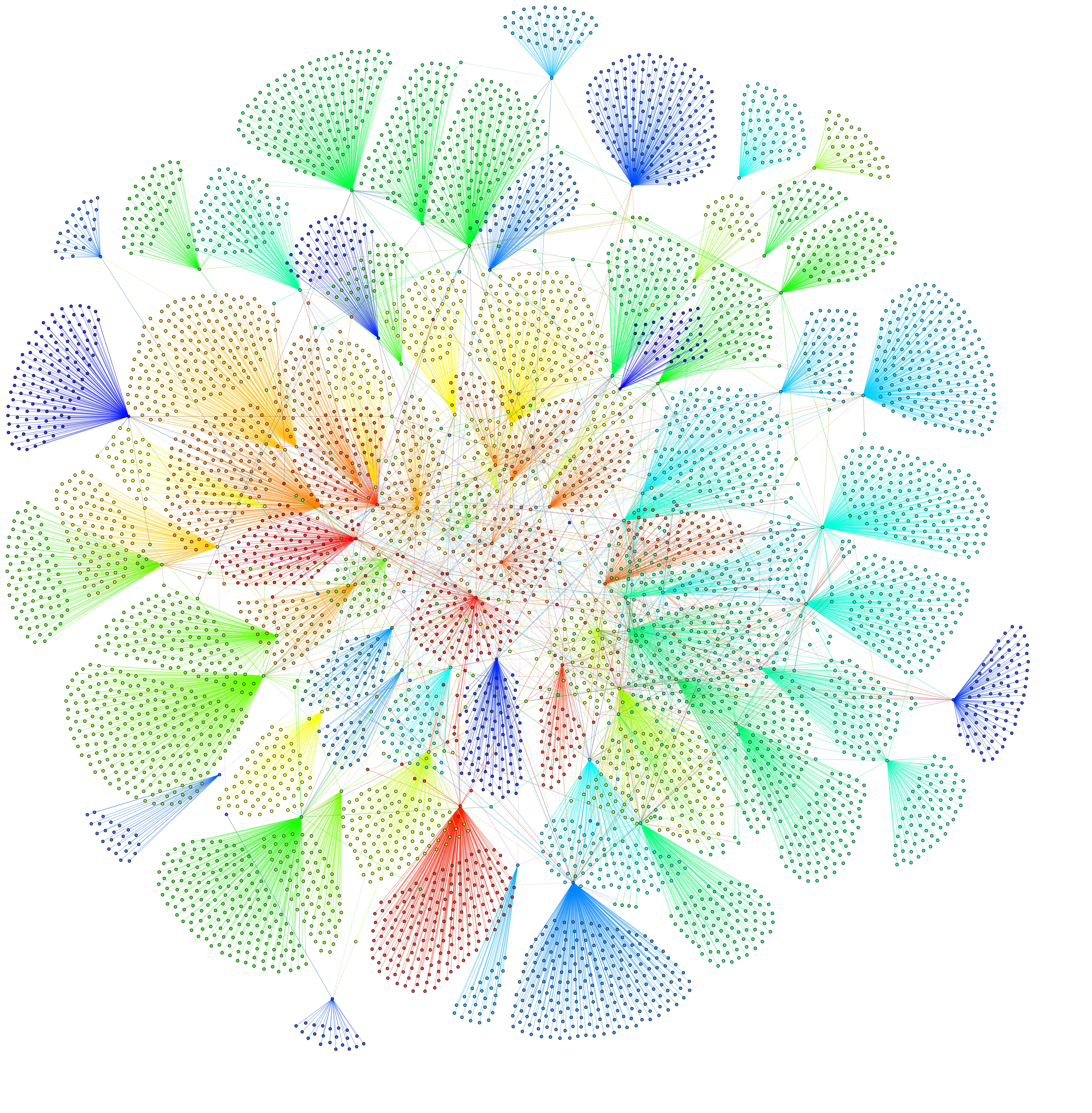}
	\end{center}
	\vspace{-0.2in}
	\caption{MIT phone call network: There are 6819 vertices and 7768 edges, which are colored by labels of  the~clusters. }
		\vspace{-0.2in}
\label{Figure 3: MIT Phone call network}

\end{wrapfigure}

For the numerical comparison, we consider that all four covariates of the cluster are used for CAR. 
The bias  and standard deviation  of the estimated ATE under the setting of  $\alpha \in \lbrace 0,1\rbrace$  are presented  in Table \ref{Tabel 3: bias and Variance of  MIT phone call network}. The simulation results exploring other  parameter settings are presented in Figures \ref{Figure 4 : Performance MIT phone call network}.

\textbf{Results:}  Similar to the results presented in the previous section, the classical A/B testing procedure works well when there is no network effect. However,  bias caused by using this method increases dramatically as the network effect increases.  Comparing with the values under CRU,  CE  under CAR and CRC are moderately biased. CAE under CAR and CRC are consistent for the estimation of the  ATE, so that the biases are about zero. It can be seen from Figure \ref{Figure 4 : Performance MIT phone call network} and Table \ref{Tabel 3: bias and Variance of  MIT phone call network} that our cluster-adaptive network A/B testing procedure has much smaller standard deviation compared to CRC following with CAE. According to our simulations, $PRIV(\hat\tau_{CAE}|CAR)$ under $\alpha=0$ and $\alpha=1$ are $84.52\%$  and  $84.55\%$, respectively. These results  suggest that using our  cluster-adaptive network A/B testing procedure is able to  not only reduce the bias caused by the MIT phone call network but also improve efficiency for  estimation of the ATE.

\vspace{-0.1in}
\begin{figure}[H]
	\centering
			\includegraphics[width=1\linewidth]{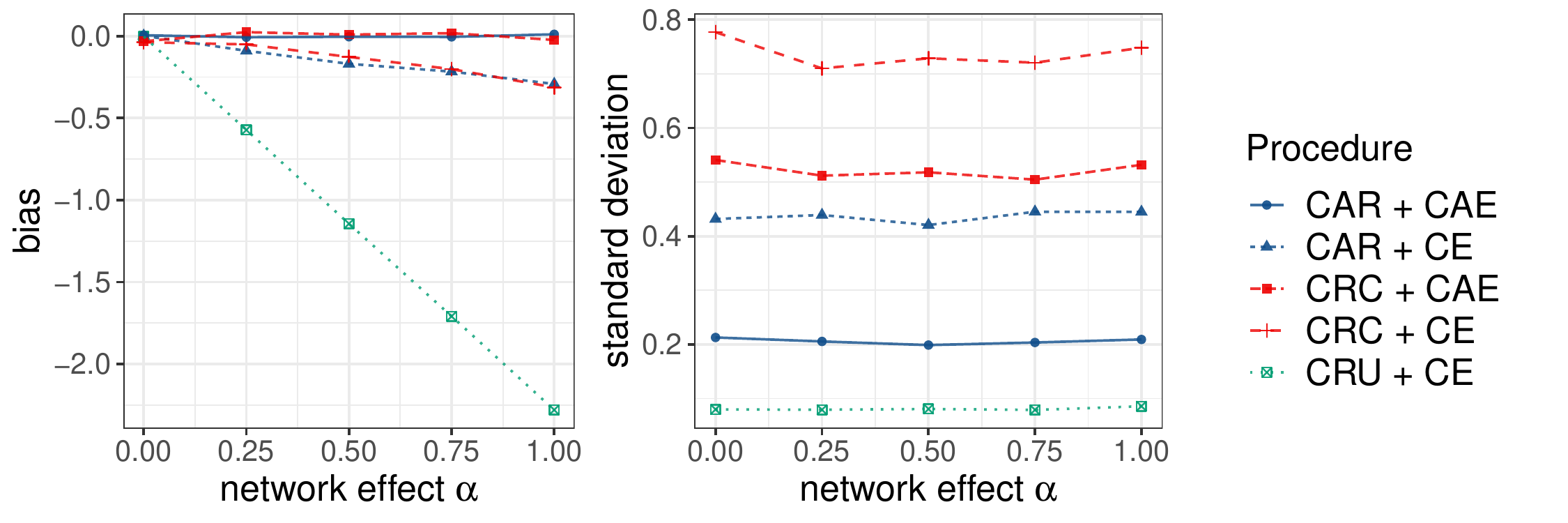}
\caption{The bias and standard deviation of the estimated ATE versus network effect $\alpha$ given  the MIT phone call network, based on 1000 runs.}
\label{Figure 4 : Performance MIT phone call network}
\end{figure}

\vspace{-0.1in}

	\begin{table}[H]
\vspace{-0.2in}
	\centering
	\caption{The bias and standard deviation  of  the estimated ATE  given the MIT phone call network, based on 1000 runs. }
	\label{Tabel 3: bias and Variance of  MIT phone call network}	
		\begin{tabular}{clrrrrrrrr}
		\hline
			\textit{Randomization} & \multirow{2}{*}{\textit{Estimator}} & & \multicolumn{2}{c}{$\alpha = 0$} & & \multicolumn{2}{c}{$\alpha = 1$} \\
			\cline{4-5} \cline{7-8}
		\textit{Procedure} & & & \textit{Bias} & \textit{s.d.} && \textit{Bias} & \textit{s.d.} \\
		\cline{1-2} \cline{4-5} \cline{7-8}
		\multirow{2}{*}{CAR}  & CAE      &  & 0.005  & 0.21 & & 0.011  & 0.21 \\
		& CE       &  & 0.002  & 0.43 & & -0.292 & 0.44 \\
		\cline{1-2} \cline{4-5} \cline{7-8}
		\multirow{2}{*}{CRC} & CAE     &   & -0.032 & 0.54 & & -0.024 & 0.53 \\
		& CE      &   & -0.037 & 0.78& & -0.313 & 0.75 \\
		\cline{1-2} \cline{4-5} \cline{7-8}
		CRU           & CE      &   & -0.003 & 0.08 && -2.281 & 0.09 \\
		\hline
	\end{tabular}

\end{table}
\vspace{-0.2in}

\section{Conclusion}
\label{Section:Conclusion}

\noindent

In this paper, we propose a cluster-adaptive network A/B testing procedure which consists of a cluster-adaptive randomization and a cluster-adjusted estimator. Utilizing  the cluster features to produce balanced treatment arms of  clusters, the proposed cluster-adaptive randomization is able to improve the estimation efficiency by reducing the variance of the ATE estimator. The cluster-adjusted estimator, which  eliminates the network effect gotten from linked clusters and adjusts the inflated cluster effect,   achieves a consistent estimation for the ATE,  when the employed randomization satisfies the balance condition (\ref{ass1}). Numerical studies suggest that our cluster-adaptive network A/B testing procedure outperforms other methods, as it can produce consistent estimation  and reduce the variance of the estimator. These results offer  a new angle to study the network A/B testing problem. The randomization steps should take the balance of the cluster features into account, so that the treatment arms are more comparable. Moreover, the balanced treatment arms may improve the estimation of the ATE. These ideas  are  valuable  in both theoretical and applied aspects, and  can help in the design and analysis of  novel network A/B testing  procedures.

\section{Broader Impact}

Any social network based on data driven methods are in  potential risk of privacy leakage and ethical problems. Though the usage of network information can improve the decision-making process, researchers have to be aware of some other consequences caused by their algorithm.  In this paper,  features of clusters in a social network are utilized to improve the results of network A/B testing. In practice, ethical and  privacy concerns should be taken into account for  the choice of the features put into the algorithm. It can be seen from the simulation results that even if not all of the features are put into the algorithm, the bias and efficiency can still be improved a lot. Therefore, if the sensitive features, which may introduce potential risks, are not used, the cluster adaptive network A/B testing procedure still works well. Therefore, some of the potential risks can be avoided by the   appropriate choice of the  features used in the algorithm. The proposed procedure can be implemented in many fields  comprising drug development, economics, E-commerce, finance, network data analysis and online testing.

\bibliographystyle{plain}
\bibliography{reference}

\newpage
\appendix
\section{Proof of the Main Results}
\label{Section: proofs}

\subsection{Proof of Theorem 2.1.}

Theorem~\ref{Th2} characterizes the balance property of CAR. It shows that as the number of clusters increases the Mahalanobis distance converges to zero in probability, so that the standardized difference of the mean for the clusters' covariates is minimized as well. As a result, the treatment arms are more feasibly compared.

\begin{proof}[Proof of Theorem~\ref{Th2}]
 Similar to the approach in  \cite{qin2016pairwise}, this proof consists of the following two main steps.  

\begin{enumerate}
    \item First, let $\Sigma= \text{Cov}[\bX_j]>0$, and $\xi_j = \Sigma^{-1/2}\left(\bX_j-\mathbb{E}[\bX_j]\right)$, where  $\Sigma^{-1/2}$ is the Cholesky square root of $\Sigma^{-1}$. Therefore, $\mathbb{E}[\bm{\xi}_j]=0$ and $\ocov[\bm{\xi}_j]=\mathbf{I}$. Consider 
    
    \begin{align}
        \tilde{M}_m & = \frac{m}{4}\left(\bar{\bX}_A-\bar{\bX}_B \right)^{\text{t}}\Sigma^{-1}\left(\bar{\bX}_A-\bar{\bX}_B \right) \nonumber\\
        &= \frac{m}{4}\left(\bar{\bm{\xi}}_A-\bar{\bm{\xi}}_B\right)^{\text{t}}\left(\bar{\bm{\xi}}_A-\bar{\bm{\xi}}_B\right),  \label{proof.lm1.eq0}
    \end{align}
where  $\bar{\bm{\xi}}_A$ and $\bar{\bm{\xi}}_B$ are the sample averages of $\bm{\xi}_j$ for treatments $A$ and $B$, respectively. Using the drift condition \cite{meyn2012markov}, we prove that $\tilde{M}_m=O_P(m^{-1})$    in Lemma~\ref{lm1}. 
    \item  As $m \to \infty$, $\ocov(\bX) \stackrel{\mathscr{P}}{\longrightarrow}\Sigma$, then  it follows from Lemma~\ref{lm1}, the continuous mapping theorem, and the  Slutsky's theorem, that (\ref{th2.eq1}) in Theorem 2.1 follows. 
\end{enumerate}
\end{proof}

The key ingredient for  the proof of Theorem~\ref{Th2}   is presented in the following lemma. 
\begin{lemma}
\label{lm1}
Under the condition of Theorem~\ref{Th2}, $\lbrace m \tilde{M}_{m} \rbrace_{m=1}^{\infty}$ is a sequence of Harris recurrent Markov chains, and   $\tilde{M}_m = O_p(m^{-1})$.

\end{lemma}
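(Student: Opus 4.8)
The plan is to realize $m\tilde M_m$ as a continuous functional of an $\mathbb R^p$-valued Markov chain generated by the i.i.d.\ standardized covariates $\bm\xi_j=\Sigma^{-1/2}(\bX_j-\mE[\bX_j])$, and to control that chain with the Foster--Lyapunov drift criteria of \cite[Ch.~11 and 14]{meyn2012markov}, in the spirit of \cite{qin2016pairwise}. First I would use the pairwise structure of CAR: after $m=2k$ clusters have been assigned, exactly $k$ of them lie in each arm, so $\bar{\bm\xi}_A-\bar{\bm\xi}_B=k^{-1}S_{2k}$ with $S_{2k}:=\sum_{j=1}^{2k}(2T_j-1)\bm\xi_j$, and hence $m\tilde M_m=\tfrac{m}{4}k^{-2}\|S_{2k}\|^2=\|S_{2k}\|^2$. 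Writing $\bm u_k:=\bm\xi_{2k-1}-\bm\xi_{2k}$, the $k$-th pair updates the state by $S_{2k}=S_{2k-2}\pm\bm u_k$, and since $\|S+\bm u\|^2-\|S-\bm u\|^2=4\langle S,\bm u\rangle$, CAR selects the sign $\varepsilon_k\in\{\pm1\}$ with $\varepsilon_k\langle S_{2k-2},\bm u_k\rangle=-|\langle S_{2k-2},\bm u_k\rangle|$ with probability $q$ and the opposite sign with probability $1-q$ (a fair coin on the tie). Because the $\bm\xi_j$ are i.i.d., $\{S_{2k}\}_{k\ge1}$ is a time-homogeneous Markov chain with $S_0=\bm0$, of which $m\tilde M_m=\|S_{2k}\|^2$ is a continuous function.

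Second, I would establish a drift inequality for the Lyapunov function $V(S):=\|S\|^2$. Conditioning on $S_{2k-2}=S$ and averaging over $\bm u_k$ (independent of $S_{2k-2}$) yields
\[
\mE[V(S_{2k})\mid S_{2k-2}=S]=\|S\|^2-2(2q-1)\,\mE\big|\langle S,\bm u_k\rangle\big|+\mE\|\bm u_k\|^2 ,
\]
using $\mE[\varepsilon_k\langle S,\bm u_k\rangle\mid S,\bm u_k]=-(2q-1)|\langle S,\bm u_k\rangle|$. Here $\mE\|\bm u_k\|^2=2p$, and $\mE|\langle S,\bm u_k\rangle|\ge c_1\|S\|$ with $c_1:=\inf_{\|v\|=1}\mE|\langle v,\bm u_1\rangle|$; this infimum is strictly positive since $v\mapsto\mE|\langle v,\bm u_1\rangle|$ is continuous on the compact unit sphere (dominated convergence, with $\|\bm u_1\|\in L^1$) and cannot vanish because $\ovar(\langle v,\bm u_1\rangle)=2$. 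Hence $PV\le V-c_0\sqrt V+2p$ on the whole state space with $c_0:=2(2q-1)c_1>0$, so that outside the compact set $C:=\{\|S\|\le(2p+1)/c_0\}$ the chain has the strict negative drift $PV\le V-1$.

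Third, I would combine this drift with $\psi$-irreducibility of $\{S_{2k}\}$ and the fact that $C$ is a small (petite) set, and invoke the drift criteria of \cite{meyn2012markov} to conclude that $\{S_{2k}\}$ is positive Harris recurrent with a unique stationary law $\pi$ satisfying $\int\|S\|\,\pi(dS)<\infty$; aperiodicity (from the tie moves) then gives $P^k(\bm0,\cdot)\Rightarrow\pi$, so the marginals $\{\|S_{2k}\|^2\}_k$ are tight, i.e.\ $m\tilde M_m=O_p(1)$ and hence $\tilde M_m=O_p(m^{-1})$; for odd $m$ a single extra independent cluster term changes $\|S_m\|^2$ by at most $2\|S_{m-1}\|^2+2\|\bm\xi_m\|^2=O_p(1)$, so the bound persists. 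I expect the main obstacle to be this last step rather than the drift computation: one must verify $\psi$-irreducibility and the small-set/petiteness property uniformly over the covariate law (immediate when $\bm u_k$ has a density, and reducing to Foster's criterion for a countable-state chain when the covariates are discrete, as in our simulations), and one must upgrade the drift into genuine tightness of the marginals started from $S_0=\bm0$, not merely into existence of a stationary measure.
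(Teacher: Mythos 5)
Your proof takes essentially the same route as the paper's: both reduce $m\tilde M_m$ to the squared norm of the signed sum $S_{2k}=\sum_j(2T_j-1)\bm\xi_j$ (the paper's $\bm\eta_m$), use $V(S)=\|S\|^2$ as the Lyapunov function, compute the one-pair drift $-2(2q-1)\,\mathbb{E}\left|\langle S,\bm u_k\rangle\right|+\mathbb{E}\|\bm u_k\|^2$, and invoke the Foster--Lyapunov drift criterion of Meyn and Tweedie to obtain Harris recurrence and hence $\|S_{2k}\|^2=O_p(1)$. If anything, your version is more careful at the two points the paper leaves informal: you replace its $\mathbb{E}[|\cos\theta|]$ factorization with the compactness bound $\mathbb{E}\left|\langle S,\bm u_k\rangle\right|\ge c_1\|S\|$, and you explicitly flag the need to verify $\psi$-irreducibility and petiteness and to upgrade the drift into tightness of the marginals started from $S_0=\bm 0$.
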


\begin{proof}[Proof of Lemma~\ref{lm1}.]
By (\ref{proof.lm1.eq0}), it can be seen  that
\begin{align*}
    \tilde{M}_m = \frac{1}{m}\bm{\eta}_m^{\text{t}} \bm{\eta} _m,
\end{align*}
where $\bm{\eta}_m= \frac{m}{2}\left(\bar{\bm{\xi}}_A-\bar{\bm{\xi}}_B\right) =\sum_{T_j=1}\bm{\xi}_j -   \sum_{T_j=0}\bm{\xi}_j$. Then if we can show $\bm{\eta}_m$ follows a stationary distribution, then it follows that $\bm{\eta}_m^{\text{t}}\bm{\eta}_m = O_p(1)$. Therefore, Lemma~\ref{lm1} follows from Slutsky's theorem.

In order to verify that $\bm{\eta}_m$ follows a stationary distribution, consider  verifying the drift condition, i.e., (iii)  of Theorem 11.0.1. in \cite{meyn2012markov}. Define $\bm{V}(\bm{\eta}_m)=\bm{\eta}_m^{\text{t}}\bm{\eta}_m$ as the test function and 
\begin{align*}
\bm{\Delta}_{m+2}&=\bm{\eta}_{m+2}- \bm{\eta}_{m}\\
&  = (-1)^{T_{m+2}}(\xi_{m+1}-\xi_{m+2}),
\end{align*}
where $\lbrace \bm{\eta}_{2k} \rbrace_{k=1}^\infty$ is a Markov process. To calculate  the test function, let $\mathbb{E}[\cdot|\bm{\eta}_m]=\mathbb{E}_m[\cdot]$. It follows that
\begin{align}
\mathbb{E}_m[\bm{V}(\bm{\eta}_{m+2})] -\bm{V}(\bm{\eta}_m)& = 2\bm{\bm{\eta}}_{m}^{\text{t}} \mathbb{E}_m\left[\bm{\Delta}_{m+2}\right]  + \mathbb{E}_m\left[\bm{\Delta}_{m+2}^{\text{t}}\bm{\Delta}_{m+2}\right]  , \label{p.lm1.eq1}
\end{align}
where $\mathbb{E}_m [\bm{\Delta}_{m+2}^{\text{t}}\bm{\Delta}_{m+2}]=\mathbb{E}_n\left[(-1)^{2T_{m+2}}(\bm{\xi}_{n+1}-\bm{\xi}_{n+2})^{\text{t}}(\bm{\xi}_{n+1}-\bm{\xi}_{n+2})\right]\ge0$. For the first term in (\ref{p.lm1.eq1}), it follows that
\begin{align*}
\bm{\eta}_{m} ^{\text{t}}   \mathbb{E}_m[\bm{\Delta}_{m+2}] & = \mathbb{E}_m \left[\mathbb{E} \left[(-1)^{T_{m+2}}\bm{\eta}_m^{\text{t}} (\bm{\xi}_{m+1}-\bm{\xi}_{m+2})  |\bm{\eta}_m, \bm{\xi}_{m+1} ,\bm{\xi}_{m+2} \right]\right]\\
& =\mathbb{E}_{m}\left[ (1-2q)  \left|\bm{\eta}_m^{\text{t}}( \bm{\xi}_{m+1}-\bm{\xi}_{m+2})\right|\right]\\
&= (1-2q)\left|\bm{\eta}_m\right| \cdot\mathbb{E}_m\left[ \left|\bm{\xi}_{m+1}-\bm{\xi}_{m+2}\right|\right]\cdot \mathbb{E}\left[|cos\theta|\right],
\end{align*}
where $\theta$ is the angle between $\bm{\eta}_m$ and $\bm{\xi}_{m+1}-\bm{m+2}$. Here, $\mathbb{E}_m\left[ \left|\bm{\xi}_{m+1}-\bm{\xi}_{m+2}\right|\right]$ and $ \mathbb{E}\left[|cos\theta|\right]$ are two positive constants.  It follows from $q>1/2$ that there exist constants $b>0,c>0$ such that 
\begin{equation*}
 \begin{array}{lll}
       \mathbb{E}_m\left[ \bm{V}(\bm{\eta}_{m+2})\right]- \bm{V}(\bm{\eta}_m)< -c     & \textit{for} \ \   |\bm{\eta}_{m}|>b  & and \\
     \mathbb{E}_m\left[ \bm{V}(\bm{\eta}_{m+2})\right]- \bm{V}(\bm{\eta}_m)<  \mathbb{E}_m\left[ \bm{\Delta}_{m+2}^{\text{t}}\bm{\Delta}_{m+2} \right]     & \textit{ for} \ \  |\bm{\eta}_{m}|\le b. 
    \end{array}
\end{equation*}
Therefore, the drift condition is checked and the proof of  the lemma is completed.

\end{proof}

\subsection{Proof of Theorem 2.2.}

The representation of CAE, i.e., (\ref{th1.presentation}), not only shows that CAE can eliminate the network effect, but also provides a way to study the asymptotic behavior of CAE. Theorem~\ref{Th1} only studies the consistency with the balance condition (\ref{ass1}). If more assumptions of network and the randomization procedure are provided, the asymptotic normality could be derived via (\ref{th1.presentation}). The proof of Theorem~\ref{Th1} is shown  as follows.

\begin{proof}[Proof of Theorem~\ref{Th1}.]
 Under the assumption of Theorem~\ref{Th1}, we rewrite the response model as follows.  For  $i\in\mathcal{C}_j,$
	\begin{align*}
	 Y_i 
	& =  \mu_0(1-Z_i) + \mu_1 Z_i + \bm{\beta}\bm{X}_j  + \sum_{k\in\mathcal{C}_j}\gamma_{ki}+ \sum_{k \notin\mathcal{C}_j}\gamma_{ki}+ \epsilon_i,
	\end{align*}
	where $\gamma_{ki}= \alpha_0A_{ik}Z_k(Z_i-1)+\alpha_1A_{ki}(Z_k-1)Z_i$ and $A_{ik}$ is the component of row $i$ column $k$ for  the adjacent matrix $A$. If the employed randomization is performed on clusters, then $\sum_{k\in\mathcal{C}_j}\gamma_{ki}=0$ as $Z_{k}=Z_i$, for $i,k\in\mathcal{C}_j$. If the users used for estimation are  all from the uncontaminated set, i.e., $\mU =  \{i: \bA_{i*} (\boldsymbol{1}-\bZ) = 0 \text{ and } Z_i =1\} \cup \{i: \bA_{i*} \bZ = 0  \text{ and } Z_i =0\},$
	then $\sum_{k\notin \mathcal{C}_j}\gamma_{ki}=0$ for $i\in\mU$.	Therefore,  it follows that 
\begin{align}
    Y_{i}=\mu_0(1-Z_i) + \mu_1 Z_i + \bm{\beta}^{\text{t}}\bm{X}_j  + \epsilon_i. \label{proof.th2.eq1}
\end{align}
It can be seen from (\ref{proof.th2.eq1}) that the network effect is  eliminated and (\ref{th1.presentation}) in Theorem~\ref{Th1}  follows. 

Now we  show that $\hat{\tau}_{CAE}$ is consistent. Note that  the users' cluster features will have  the same  value if the users are in the same cluster, so 
	\begin{align*}
	\bar{W}_j(\mathcal{U}) & =  \frac{1}{n_j(\mathcal{U})} \sum_{i\in \mathcal{C}_j\cap \mathcal{U}} (\bm{X}_j ^{\text{t}}\bm{\beta}+ \epsilon_i)\\
	& = \bm{X}_j^{\text{t}} \bm{\beta}_j + \frac{1}{n_j(\mathcal{U})} \sum_{i\in \mathcal{C}_j\cap \mathcal{U}}  \epsilon_i.
	\end{align*}
	According to assumption (\ref{ass1}) in Theorem~\ref{Th1}, it is easy to see that
	\begin{align*}
	 \begin{array}{cccc}
	 	\frac{m_A}{m}\stackrel{\mathscr{P}}{\longrightarrow}1/2&   \frac{m_b}{m}\stackrel{\mathscr{P}}{\longrightarrow}1/2 &  and &\frac{1}{m} \sum_{j=1}^m (2T_j-1)\bm{X}_j ^{\text{t}}\bm{\beta}\stackrel{\mathscr{P}}{\longrightarrow}0.
	 \end{array}
	\end{align*}
	Then the consistency of $\hat{\tau}_{CAE}$ follows from Slutsky's theorem and
	\begin{align}
	    \frac{1}{m} \sum_{j=1}^m  \frac{2T_j-1}{n_j(\mathcal{U})} \sum_{i\in \mathcal{C}_j \cap\mathcal{U}} \epsilon_i\stackrel{\mathscr{P}}{\longrightarrow}0,
	\end{align}
which can be shown by Chebyshev's inequality, provided that  $\epsilon_ i$ are i.i.d. with mean zero and variance $\sigma_{\epsilon}^2<\infty$.

\end{proof}

\subsection{Proof of Corollary 2.1.}

\begin{proof}[Proof of Corollary 2.1.]

\begin{enumerate}
    \item  Under CRC, $T_j \stackrel{i.i.d.}{\sim} Bernoulli (1/2)$ and $\bm{X}_j \perp T_j$, then
    \begin{align*}
    \begin{array}{ccc}
    \mathbb{E}[2T_j-1]=0 & and & \mathbb{E}[(2T_j-1)\bm{X}_j]=0. 
    \end{array}
    \end{align*}
    Therefore,  (\ref{ass1}) in Theorem~\ref{Th1} follows from WLLN. 
    
    \item Under CAR, assume $m$ is even, then $m_A=m_B = m/2$ a.s. and  $2\sum_{j=1}^m T_j-1  = m_{A}-m_{B}=0$. According to the proof of Lemma 3.1., $\bm{\eta}_m$ follows a stationary distribution. It follows that 
    \begin{align*}
        \bar{\bX}_A-\bar{\bX}_B & = \frac{2}{m} (\bar{\bm{\xi}}_A-\bar{\bm{\xi}}_B)= \frac{2}{m}\cdot\bm{\eta}_m\\
        & = O_p\left(\frac{1}{m}\right),
    \end{align*}
Therefore, the balance condition in Theorem 2.2.  follows.
\end{enumerate}

The balance condition is checked for CRC and CAR, hence the proof of Corollary 2.1 is completed.
	\end{proof}

	\vspace{0.2in}
	
\subsection{Proof of Theorem 2.3.}

Theorem~\ref{Th3} shows the benefit of using CAR to  produce balanced treatment arms in the estimation of the  ATE. As the lower bound in (\ref{lb.eq}) is positive, using CAR following with CAE will be more efficient than using CRC following with CAE. This result can further imply that the valid statistical inference for the ATE, i.e., the test that obtains the correct type I error,   using our cluster-adaptive network A/B testing, can achieve higher power and tighter confidence intervals. These results are left for future studies. The proof of Theorem~\ref{Th3} is as follows.

\begin{proof}[Proof of Theorem~\ref{Th3}.]

	Suppose the pseudo response variables $Y_j^P$ on the cluster level are generated by the same additive linear model
	$$Y^P_j = \mu_0 (1-T_j) + \mu_1 T_j + \bbeta^{\text{t}}\bX_j + \epsilon_j,$$
	where $\epsilon_j$ has the same distribution as $\epsilon_i$ in the response model for each user.  Then the  ATE based on pseudo response variables is
	
	\begin{align*}
	\hat\tau^C & =  \frac{1}{m_A}\sum_{j=1}^m T_jY_j^P  - \frac{1 }{m_B}\sum_{j=1}^m (1-T_j)Y_j^P  \\
	& \simeq  \tau + \bbeta^{\text{t}}(\overline\bX_A- \overline\bX_B) + \frac{2}{m}\sum_{j=1}^m (2T_j-1)\epsilon_j,
	\end{align*}
	where $\bar{\bm{X}}_A $, $\bar{\bm{X}}_B$ are  the sample means of cluster covariates in treatments A and B, respectively. Here $\hat{\tau}^C$ is a pseudo treatment effect estimator, which assumes that each cluster has only one response.\\

	The adjusted cluster-level estimator $\hat\tau_{CAE}$ only uses the sample average of the  $n_j(\mathcal{U})$ users in cluster $\mC_j$, i.e.,
	$$ \overline Y_{j}(\mathcal{U})  = \mu_0 (1-T_j) + \mu_1 T_j + \bbeta^{\text{t}}\bX_j + \frac{1}{n_j(\mathcal{U})}\sum_{i\in \mathcal{C}_j \cap\mathcal{U}}\epsilon_i.$$
	Then  CAE can be written as 
	\begin{align*}
	\hat\tau_{CAE} & =\frac{1}{m_A}\sum_{j=1}^m T_j \bar{Y}_j(\mathcal{U}) - \frac{1}{m_B} \sum_{j=1}^m( 1-T_j) \bar{Y}_j(\mathcal{U})\\
	& \simeq \tau +  \bbeta^{\text{t}}(\overline \bX_A - \overline \bX_B)  + \frac{2}{m} \sum_{j=1}^m  \frac{2T_j-1}{n_j(\mathcal{U})} \sum_{i\in \mathcal{C}_j \cap\mathcal{U}} \epsilon_i,\\
	where~	&\ovar \left[ \frac{2}{m} \sum_{j=1}^m  \frac{2T_j-1}{n_j(\mathcal{U})} \sum_{i\in \mathcal{C}_j \cap\mathcal{U}} \epsilon_i \right]  = \frac{4}{m^2}\sum_{j=1}^m \frac{\sigma_\epsilon^2}{n_j (\mathcal{U})} .
	\end{align*}
Denote $*$ as one of CAR or CRC. Then, the variances for two types of estimators under a  given  randomization * are
	\begin{align*}
	\ovar[\hat\tau^C|*] & \simeq \bbeta^{\text{t}}\ocov[\overline\bX_A- \overline\bX_B|*]\bbeta + \ovar\left[\frac{2}{m}\sum_{j=1}^m (2T_j-1)\epsilon_j \right]\\
	& =  \bbeta^{\text{t}}\ocov[\overline\bX_A- \overline\bX_B|*] \bbeta + \frac{4}{m} \sigma_\epsilon^2,\ \ \ \  and\\
	\ovar[\hat\tau_{CAE}|*] & \simeq \bbeta^{\text{t}}\ocov[\overline\bX_A- \overline\bX_B|*]\bbeta + \frac{4}{m^2}\sum_{j=1}^m \frac{\sigma_\epsilon^2}{n_j (\mathcal{U})}\\
	& \leq \bbeta^{\text{t}}\ocov[\overline\bX_A- \overline\bX_B|*] \bbeta + \frac{4}{m} \frac{1}{ \min_{j} n_j(\mathcal{U})} \sigma_\epsilon^2\\
	& \leq\ovar[\hat\tau^C|*].
	\end{align*}
	Then, under any randomization procedure $*$ we have
	\begin{align*}
	\ovar[\hat\tau_{CAE}|*] - \ovar[\hat\tau^C|*] & =  \frac{4}{m}\left[ \frac{1}{m}\cdot\sum_{j=1}^m \frac{1}{n_j(\mathcal{U})}-1\right] \sigma_\epsilon^2 := C_1 \sigma_\epsilon^2 ,
	\end{align*}
	where $C_1$ is some  constant depending on $\mathcal{G}$. As $n_j(\mathcal{U})\geq1$, it can be seen that $C_1\le0$ ($C_1 =0$ when $n_{j}(\mathcal{U})=1$ for  $j=1,\cdots,m$). By Theorem 3.2. in \cite{qin2016pairwise}, it follows   that $$PRIV(\tau^C|CAR) = \frac{\ovar[\hat\tau^C|CR] - \ovar[\hat\tau^C|CAR]}{\ovar[\hat\tau^C|CR]} = (1-\frac{\mE[M_m|CAR]}{p})R_C^2,$$
	
	and hence
	
	\begin{align*}
	PRIV(\hat\tau_{CAE}|CAR) & = \frac{\ovar[\hat\tau_{CAE}|CRC] - \ovar[\hat\tau_{CAE}|CAR]}{\ovar[\hat\tau_{CAE}|CRC]} \\
	& = \frac{(\ovar[\hat\tau^C|CRC] + C_1 \sigma_e^2 )- (\ovar[\hat\tau^C|CAR] + C_1 \sigma_e^2) }{\ovar[\hat\tau_{CAE}|CRC]}\\
	& = (1-\frac{\mE[M_m|CAR]}{p})\cdot R_C^2\cdot \frac{\ovar[\hat\tau^C|CR]}{\ovar[\hat\tau_{CAE}|CRC]}\\
	& \geq (1-\frac{\mE[M_m|CAR]}{p})R_C^2.
	\end{align*}
This completes the proof of the theorem.
\end{proof}

\section{The Hypothetical Network}
\label{Section: Hypothetical network append}
\vspace{-0.05in}

In this section, we study the properties of our cluster-adaptive network A/B testing procedure under different structures of the hypothetical network. Specially, we are interested in the performance of our procedure when the number of edges connecting different clusters increases.

We use the same hypothetical network as in Section 3. First, 500 independent clusters are generated by  the  Watts-Strogatz small-world model. For $j = 1,\cdots,500$, we generate  the cluster size $n_j$ from a symmetric discrete distribution,  $\mP(n_j = \lambda) = \left[\left(|\lambda- 20|+ 0.5\right)\sum_{\lambda=10}^{30}(1/ \left(|\lambda- 20|+ 0.5\right)\right]^{-1}$, where $\lambda=10,...,30$. Therefore, $N = \sum_{j=1}^{500}n_j$ users are involved in this network.  In addition, $r\times N$ edges are randomly added, so that  several clusters are connected, where  $r$ is chosen as the re-connection probability.

Consider the response $Y_i$ follows the assumed response model (\ref{Y.eqsd}) with $p=4$ and $\epsilon_i\sim N(0, 2^2)$.  The covariates of the clusters,  i.e., $\bX_j= (X_{j1},\cdots,X_{j4})^{\text{t}}$, involved in the response model are  the  number of vertices in $\mC_j$ ($X_{j1}$), the number of edges in $\mC_j$ ($X_{j2}$), the  number of edges in $\mC_j$ with other clusters ($X_{j3}$), and the density of $\mC_j$ ($X_{j4}$). In addition, consider $\alpha_1=-\alpha_0=\alpha=0.5$ for the network spill-over effect, $\beta_1= \cdots= \beta_4= 1$ for the effect of clusters' covariates, and $\tau=\mu_1-\mu_0 = 1$  for the treatment effect. 
\begin{equation}
\label{Y.eqsd} Y_i  =    \mu_0(1-Z_i) + \mu_1 Z_i + \alpha_0 \bA_{i*}\bZ (Z_i -1) + \alpha_1 \bA_{i*}(\bZ-\boldsymbol{1}) Z_i + \bbeta^{\text{t}}\bX_j + \epsilon_i, 
\end{equation}
To compare with other network A/B testing procedures, we consider  the classic A/B testing procedure, i.e., complete randomization with users (CRU),   following with the classical estimator (CE),  and  graph  clustering procedure using the  complete randomization on clusters (CRC), following with CE and CAE. For our CAR method, we consider two cases,  1)  CAR (2), CAR using the first  two  covariates   $X_{j1}$ and $X_{j2}$, and 2) CAR (4),  CAR using all of the four covariates $X_{j1},\cdots,X_{j4}$. The Mahalanobis distance and the standardized  difference in covariate means for $\bm{X}_j$ are presented in Figure~\ref{Figure1: Mahalanobis distance for the hypothetical network sup} and Figure~\ref{Figure2: standardized difference in means for the hypothetical network sup}  to compare the balance properties of the  randomization procedures. We compare the performance of these A/B testing procedures for the  hypothetical network with $r=0.1,0.2,\cdots,2.$

Table~\ref{Table 1: bias and s.d. for the hypothetical network sup}  summarizes the bias and standard deviation (s.d.) of the estimated ATE for $r=0.5,1,1.5$ and 2. Furthermore, we study the  $PRIV(\hat{\tau}_{CAE}|CAR)$  and its' corresponding lower bound (L-B)  in Table~\ref{Table 2: PRIV for the hypothetical network sup} for $r=0.5,1,1.5,$ and $2$. Figure~\ref{Figure 3: bias and sd for the hypothetical netwrok sup} summarizes the bias and the standard deviation of the estimated ATE, and the  $PRIV(\hat{\tau}_{CAE}|CAR)$  and its corresponding lower bound for all different settings of $r$. Note that the lower bound of  $PRIV(\hat{\tau}_{CAE}|CAR)$  for CAR (4), i.e., L-B (4), is calculated from (\ref{lb.eq}) in Theorem 2.3 with the Mahalanobis distance using  all four covariates and $R^2_C$ calculated according to the pseudo response model (\ref{psedo}) with all four cluster covariates, whereas the lower bound of  $PRIV(\hat{\tau}_{CAE}|CAR)$  for CAR (2), i.e., L-B (2), is calculated with the Mahalanobis distance using the first two covariates used in randomization and $R^2_C$ calculated according to the pseudo response model (\ref{psedo}) with only the first two covariates.

 \textbf{Results}:  Figure~\ref{Figure1: Mahalanobis distance for the hypothetical network sup} and Figure~\ref{Figure2: standardized difference in means for the hypothetical network sup} show that CAR produces better balance for the clusters' covariates  than CRC under different network settings. In particular, CAR (4) outperforms other  procedures for all values of $r$. It can be seen from Figure~\ref{Figure2: standardized difference in means for the hypothetical network sup} that the standard deviation for each component of $\bm{X}_j$ are the smallest under  CAR (4). The two covariates used in CAR (2) have small standard deviations as well, but the two covariates  not used in CAR (2) have slightly larger  standard deviations. In addition, Figure~\ref{Figure1: Mahalanobis distance for the hypothetical network sup} shows that the Mahalanobis distances calculated with all four covariates are centered at zero for CAR(4) for all settings of $r$.  The values under CAR (2) are less centered than the corresponding values under CAR(4), but are more centered than the values under CRC.  These results show that  the standardized difference in means for the clusters' covariates  and the Mahalanobis distance  are much more centered around zero under CAR. As a result, the differences of  distributions of the clusters' covariates between treatment A and treatment B are much smaller under CAR, and the two treatment groups are more feasibly compared.

\begin{figure}[H]
	\centering
	\includegraphics[width=1\linewidth]{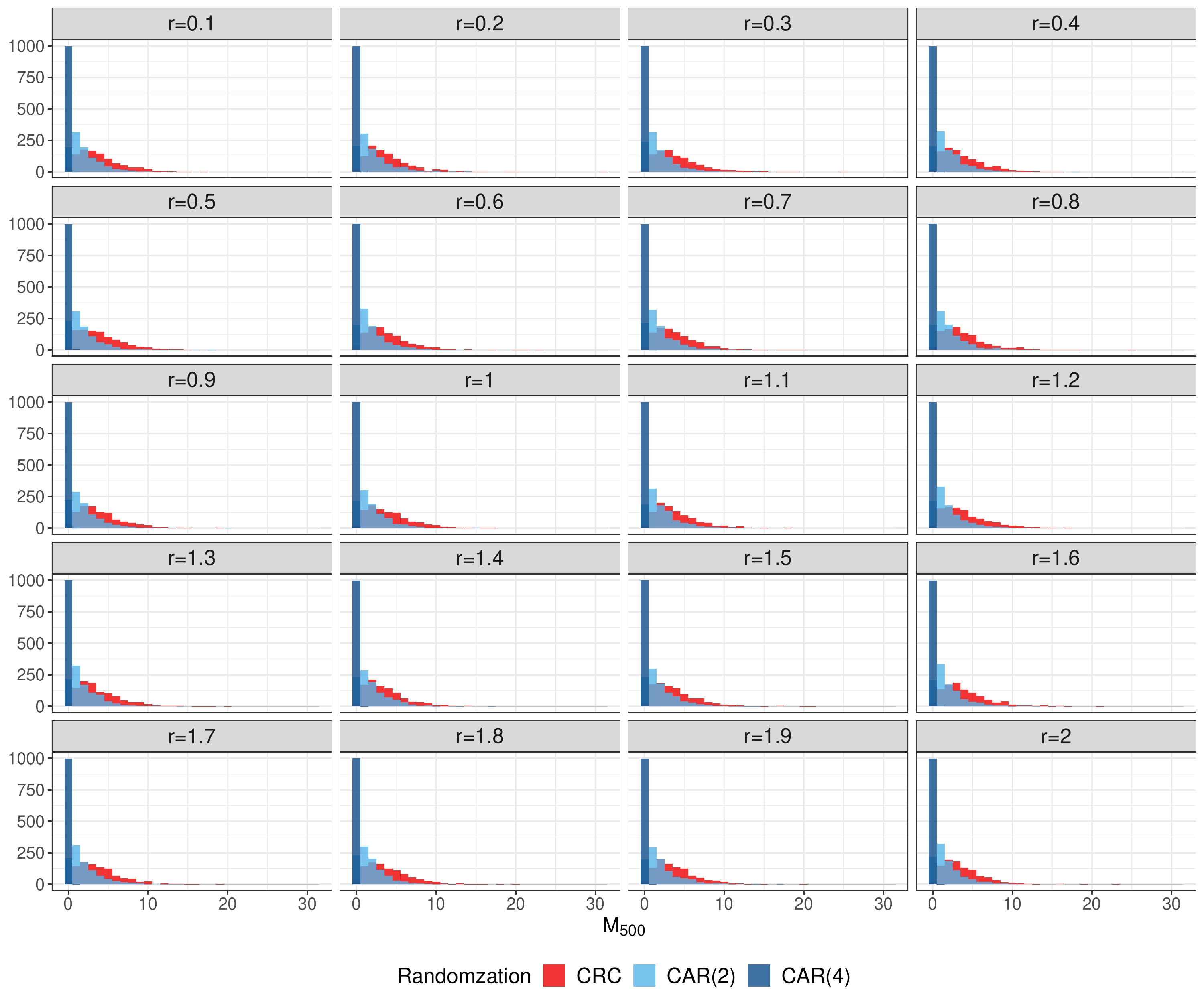}
		\caption{The Mahalanobis distance given the hypothetical network, based on 1000 runs. The Mahalanobis distance is calculated for all four covariates of the clusters.  }
		\label{Figure1: Mahalanobis distance for the hypothetical network sup} 
\end{figure}

Table~\ref{Table 1: bias and s.d. for the hypothetical network sup} and Figure~\ref{Figure 3: bias and sd for the hypothetical netwrok sup} show that the network A/B testing procedures using CAE, i.e., CAR following with CAE and CRC following with CAE, are unbiased for all values of $r$. On the other hand, the  bias of CE increases as $r$ increases under CAR, CRC, and CRU. It can be seen that  CE is less biased under CRC and CAR than the value under CRU. This is because CRC and CAR are performed on clusters, so that the network spill-over effect obtained from  the users within the same cluster is removed. However, the network spill-over effect obtained from the users from the linked clusters still exists. Therefore, the bias under CAR and CRC still increases as  $r$ increases.

\begin{figure}[H]
	\centering
	\includegraphics[width=1\linewidth]{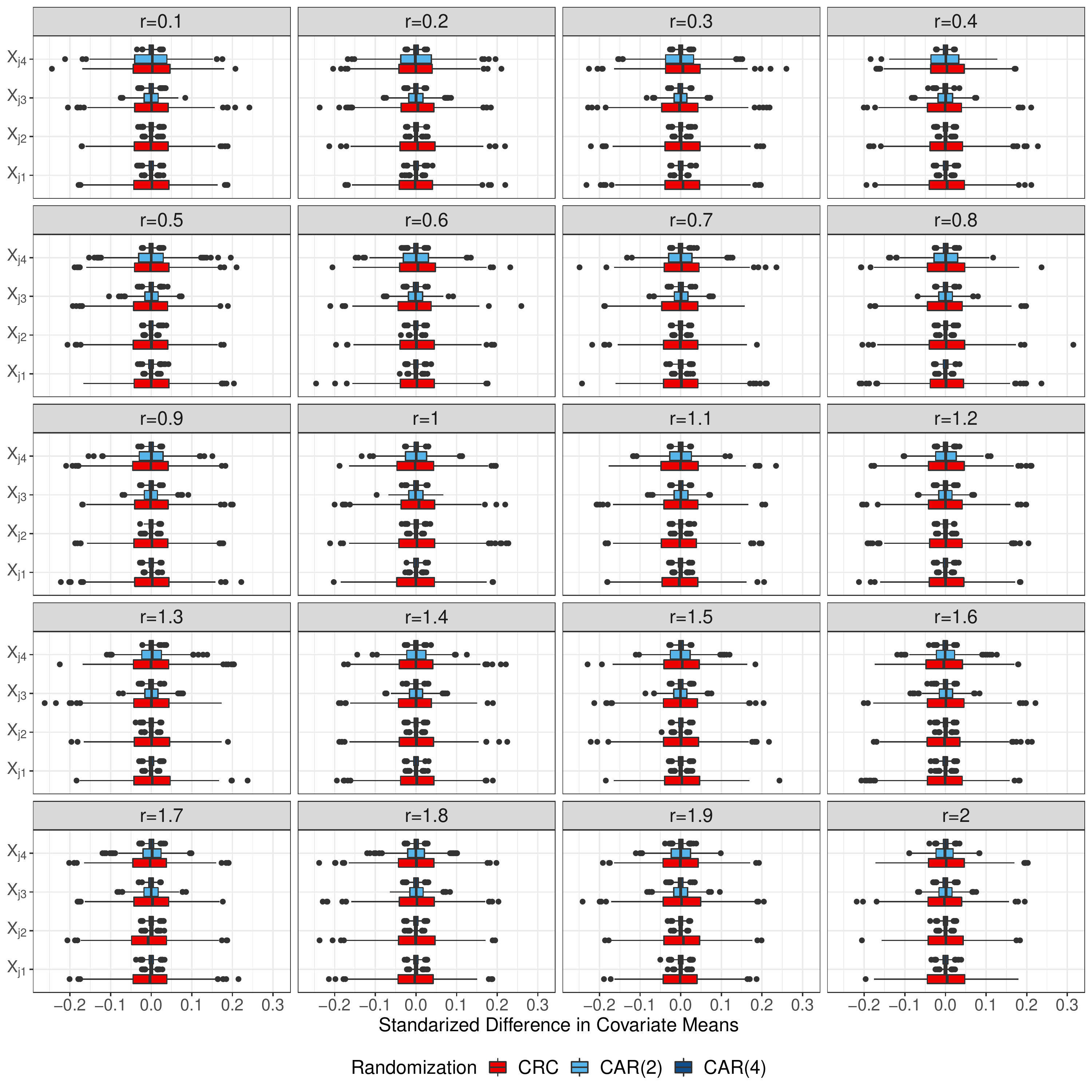}
		\caption{The standardized difference in means for  $\bX_j$, given the hypothetical network, based on 1000 runs. }
		\label{Figure2: standardized difference in means for the hypothetical network sup}
\end{figure}
To further compare the unbiased network A/B testing procedures, it can be seen from Table~\ref{Table 1: bias and s.d. for the hypothetical network sup} and Figure~\ref{Figure 3: bias and sd for the hypothetical netwrok sup} that using CAE under CAR (4) results in the smallest standard deviation among all of the unbiased procedures. The performance of  using CAE under CAR (2) is also better than the performance of using CAE under CRC. In addition, as the value of $r$ increases, the standard deviation of the CAE increases. This is because that, as $r$ increases, the number of edges connecting different clusters increases, and thus the number of users in the uncontaminated set decreases. When fewer users can be used for estimation, the efficiency of CAE  will decrease.  When $r=2$, which means that there are $2\times N$ edges among the clusters, the results still show that using CAR together with CAE is more efficient than CRC, as the PRIVs are about 70\% and 64\%, respectively.

\begin{table}[H]
\centering
	\caption{The bias and the standard deviation (s.d.) of the estimated  ATE under different A/B testing procedures given the hypothetical network,  based on 1000 runs.}
	\label{Table 1: bias and s.d. for the hypothetical network sup}
\resizebox{\textwidth}{!}{ 
\begin{tabular}{ccrrrrrrrrrrrrr}
\hline
\multirow{2}{*}{\textit{Randomization}} & \multirow{2}{*}{\textit{Estimation}} &  & \multicolumn{2}{c}{$r=0.5$} & & \multicolumn{2}{c}{$r=1$}& & \multicolumn{2}{c}{$r=1.5$} & & \multicolumn{2}{c}{$r=2$} \\
 \cline{4-5} \cline{7-8} \cline{10-11} \cline{13-14}
        &    &  & \textit{bias}     & \textit{s.d.}  &     & \textit{bias}     & \textit{s.d.}   &    & \textit{bias}     & \textit{s.d.}  &    & \textit{bias}     & \textit{s.d.}     \\
\cline{1-2}        \cline{4-5} \cline{7-8} \cline{10-11} \cline{13-14}
\multirow{2}{*}{CAR(2)} & CAE &      & 0.00  & 0.09 & & -0.00 & 0.10 & & -0.01 & 0.12& & -0.01 & 0.16 \\
        & CE   &    & -0.50 & 0.11 & & -1.00 & 0.10& & -1.50  & 0.10& & -2.00 & 0.09 \\
\cline{1-2}        \cline{4-5} \cline{7-8} \cline{10-11} \cline{13-14}
\multirow{2}{*}{CAR(4)} & CAE   &    & -0.00 & 0.06  & & 0.00  & 0.08 & & 0.00  & 0.11 & & 0.00   & 0.14 \\
        & CE    &  & -0.50 & 0.07& & -1.00 & 0.08 & & -1.50 & 0.08 & & -2.00 & 0.08 \\
\cline{1-2}        \cline{4-5} \cline{7-8} \cline{10-11} \cline{13-14}
\multirow{2}{*}{CRC} & CAE    &  & 0.00& 0.22 & & 0.01 & 0.24 & & 0.01  & 0.24& & -0.00  & 0.26 \\
        & CE     &     & -0.50 & 0.23 && -1.00 & 0.25& & -1.50 & 0.24 & & -2.00 & 0.24 \\
\cline{1-2}        \cline{4-5} \cline{7-8} \cline{10-11} \cline{13-14}
CRU     & CE      &  & -4.00 & 0.08 & & -4.50 & 0.08 & & -5.00 & 0.09 && -5.50 & 0.09\\
\hline
\end{tabular}}
\end{table}

\begin{figure}[H]
\begin{center}
	\includegraphics[width=1\linewidth]{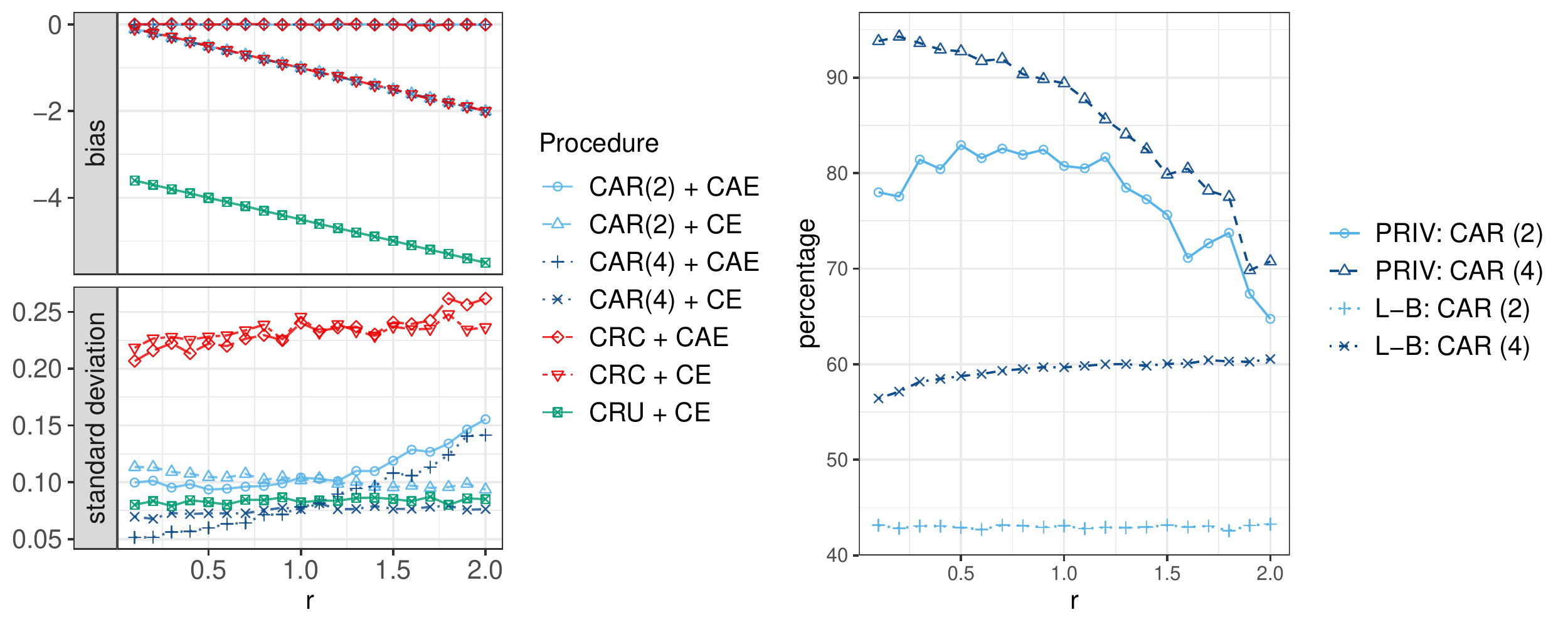}
\end{center}
		\caption{The bias and standard deviation of the estimated ATE, and the  $PRIV(\hat{\tau}_{CAE}|CAR)$    and the corresponding lower bound (L-B) versus the reconnecting probability $r$ given the hypothetical network, based on 1,000 runs.}
		\label{Figure 3: bias and sd for the hypothetical netwrok sup}
\end{figure}

\begin{table}[H]
\centering
\caption{The  $PRIV(\hat{\tau}_{CAE}|CAR)$  $(\%)$ and the corresponding lower bound (L-B) for the CAE under CAR , based on 1,000 runs. }
\label{Table 2: PRIV for the hypothetical network sup}
\begin{tabular}{ccccccccccccccccc}
\hline
\multirow{2}{*}{CAR} & & \multicolumn{2}{c}{$r=0.5$} & & \multicolumn{2}{c}{$r=1$} & & \multicolumn{2}{c}{$r=1.5$}  & & \multicolumn{2}{c}{$r=2$}   \\
\cline{3-4} \cline{ 6-7} \cline{9-10} \cline{12-13}
    & & PRIV   & (L-B) &   & PRIV   & (L-B) &    & PRIV  & (L-B)  &  & PRIV  & (L-B)     \\
    \hline
 (2) & & 82.93 & (42.91) & & 80.75  & (43.09) & & 75.63 & (43.16) && 64.75 & (43.26) \\
(4)  & & 92.75 & (58.76)  & & 89.41 & (59.67) & & 79.85 & (60.04) && 70.76 & (60.53)\\
\hline
\end{tabular}
\end{table}

Table~\ref{Table 2: PRIV for the hypothetical network sup}  and Figure~\ref{Figure 3: bias and sd for the hypothetical netwrok sup} show that the $PRIV(\hat{\tau}_{CAE}|CAR)$  for CAR(4) is getting closer to its corresponding lower bound, i.e., L-B CAR (4), as $r$ increases, whereas the lower bound of the $PRIV(\hat{\tau}_{CAE}|CAR)$  for CAR(2)  is quite conservative. The reason is  that the $R_c^2$ will be small when it is  calculated with only the first two covariates and  the true model (\ref{psedo}) includes  more covariates. The $PRIV(\hat{\tau}_{CAE}|CAR)$ under CAR (2) is above 64\%, and its corresponding lower bound is about 43\%.  The results suggest that when fewer covariates are used in CAR than the number of covariates that we should use, the efficiency of the estimation for the ATE still can be obtained.


\section{The MIT Phone Call Network}
\label{Section: MIT}

	The MIT phone call Network listed in the Network Data Repository   \cite{datarepository} is studied again for two purposes.  First,  the performance of CAR for the MIT phone call network is presented to demonstrate the advantages of using CAR for real world networks. Second, we compare the bias and  standard deviation of the estimated ATE for CAR using the first two clusters' covariates, i.e., CAR (2),  and CAR using all four covariates, i.e., CAR (4),  to show that if  more balanced treatment arms can be obtained from randomization, then we can obtain a more efficient estimation of the ATE.

	The response of the  users is generated with the same parameter  settings as  in Section~\ref{Section: Hypothetical network}. We consider $\alpha \in\lbrace  0,0.25,0.5,0.75, 1\rbrace$ for the settings of network spill-over effects. The  A/B testing procedures used in Section~\ref{Section: Hypothetical network} are also compared for the MIT phone call network.The balance properties of CAR (2), CAR (4), and CRC are also compared. For the simplicity of presentation, we only present the standardized difference in the covariate means for $\bm{X}_j$ and the Mahalanobis distance for $\alpha=0$ in Figure~\ref{Figure 4: CAR balance for the MIT netwrok sup}.  As the MIT network is fixed, the balance properties of CAR and CRC for other values of $\alpha$ are similar as presented in Figure~\ref{Figure 4: CAR balance for the MIT netwrok sup}. Therefore, these results are omitted. Figure~\ref{Figure 5: bias and sd for the MIT network sup} summarizes the bias and standard deviation of the estimated ATE for different network A/B testing procedures. 
	
\begin{figure}[H]
	\centering
	\includegraphics[width=1\linewidth]{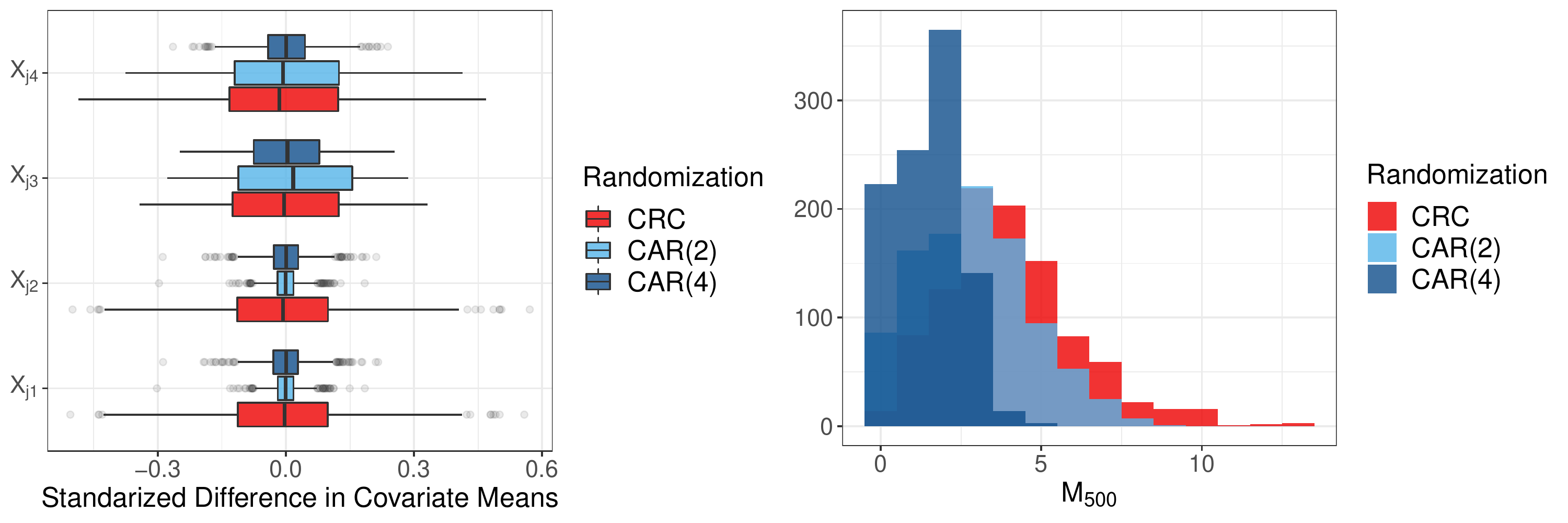}
		\caption{The standardized difference in the covariate means for $\bm{X}_j$ and the Mahalanobis distance given the MIT network ($\alpha = 0$), based on 1,000 runs.The Mahalanobis distance is calculated with all four  covariates of the clusters. }
		\label{Figure 4: CAR balance for the MIT netwrok sup}
\end{figure}

\begin{figure}[H]
	\centering
	\includegraphics[width=1\linewidth]{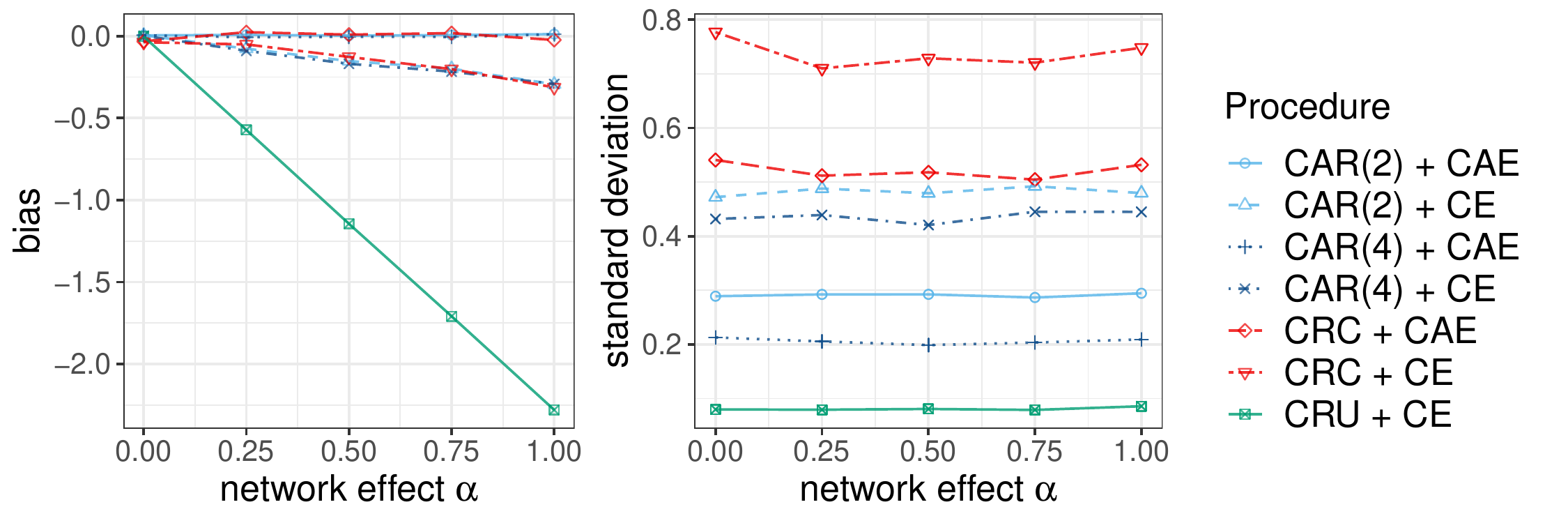}
		\caption{The bias and standard deviation of the estimated ATE versus the network effect $\alpha$ given the MIT  network, based on 1,000 runs.}
		\label{Figure 5: bias and sd for the MIT network sup}
\end{figure}

	\textbf{Results}:  
	Figure~\ref{Figure 4: CAR balance for the MIT netwrok sup} shows that the covariates of the clusters for the MIT Phone call network are more balanced between treatments A and B under CAR. It can be seen from the boxplots that the IQR of the covariates used in CAR are much smaller than the values under CRC. The Mahalanobis distance is much more centered under CAR than the value under CRC. These results illustrate the usage of CAR in improving the balance of  clusters' covariates for the MIT phone call network.
	
It can be seen from	Figure~\ref{Figure 4: CAR balance for the MIT netwrok sup}  and Figure~\ref{Figure 5: bias and sd for the MIT network sup} that using more covariates in CAR can result in more balanced treatment arms and hence more efficient estimation of the ATE.	Comparing CAR(2) and CAR(4) in Figure~\ref{Figure 4: CAR balance for the MIT netwrok sup}, as more covariates are used in CAR, the Mahalanobis distance and the  standardized difference in the covariates' means for $X_{j3}$ and $X_{j4}$ are more centered around zero. Therefore, the increase of the number of covariates used in CAR improves the balance of the clusters' covariates for  treatment A and treatment B. It can be seen from Figure~\ref{Figure 5: bias and sd for the MIT network sup} that as treatment A and treatment B are more comparable under CAR (4), the standard deviation of CAE  is much smaller  than the values under CAR (2).  

\section{Code}
See \url{https://github.com/yzhou12/CAR-network-AB-test} for the code and the  data used to run the experiments.

\end{document}